\documentclass[lettersize,journal,article]{IEEEtran}
\usepackage{tikz}
\usepackage{array}
\usepackage{times}
\usepackage{subfig}
\usepackage{helvet}
\usepackage{subfig}
\usepackage{amssymb}
\usepackage{amsthm}
\usepackage{amsmath}
\usepackage{amssymb}
\usepackage{courier}
\usepackage{multirow}
\usepackage{mathrsfs}
\usepackage{graphicx}
\usepackage{enumitem}
\usepackage{graphicx}
\usepackage{blindtext}
\usepackage{algorithm}
\usepackage{algorithmic}
\usepackage{lineno,hyperref}
\usepackage[marginal]{footmisc}
\usepackage{amsthm}
\usepackage[utf8]{inputenc}
\usepackage[english]{babel}
\usepackage{epstopdf}
\usepackage{array}
\usepackage{multirow}
\usepackage{booktabs}
\usepackage{dsfont}
\usepackage{comment}
\usepackage{endnotes}
\usepackage{makecell}
\usepackage{listings}
\usepackage{xcolor}
\usepackage{bm}
\usepackage[normalem]{ulem}
\useunder{\uline}{\ul}{}
\newtheorem{theorem}{Theorem}
\newtheorem{definition}{Definition}

\hyphenation{op-tical net-works semi-conduc-tor IEEE-Xplore}

\begin{document}

\title{Dual Test-time Training for Out-of-distribution Recommender System}

\author{Xihong~Yang$^{\ast}$,~Yiqi~Wang$^{\ast}$,~Jin~Chen,~Wenqi~Fan,~Xiangyu~Zhao,~En~Zhu$^{\dagger}$,\\
~Xinwang~Liu$^{\dagger}$,~\IEEEmembership{Senior~Member,~IEEE}, Defu Lian$^{\dagger}$
\thanks{X. Yang, Y. Wang, X. Liu and E. Zhu are with School of Computer, National University of Defense Technology, Changsha, 410073, China. (E-mail: \{yangxihong,\,yiq,\, xinwangliu,\, enzhu\} @nudt.edu.cn)}
\thanks{J. Chen is with School of Business and Management of the Hong Kong University of Science and Technology. (E-mail: jinchen@ust.hk)}
\thanks{W. Fan is with the Department of Computing (COMP) and Department of Management and Markering (MM), The Hong Kong Polytechnic University. (E-mail: wenqifan03@gmail.com)}
\thanks{X. Zhao is with Department of Data Science of City University of Hong Kong. (Email: xianzhao@cityu.edu.hk)}
\thanks{D. Lian is with the Anhui Province Key Laboratory of Big Data Analysis and Application, School of Computer Science and Technology, University of Science and Technology of China, Hefei, Anhui 230000, China. (Email: liandefu@ustc.edu.cn).}
\thanks{$^{\ast}$: Equal contribution. $^{\dagger}$: Corresponding author.}
}

\markboth{IEEE Transactions on Knowledge and Data Engineering}%
{X. Yang \MakeLowercase{\textit{et al.}}: Dual Test-time Training for Out-of-distribution Recommender System}


\maketitle

\begin{abstract}
Deep learning has been widely applied in recommender systems, which has recently achieved revolutionary progress. However, most existing learning-based methods assume that the user and item distributions remain unchanged between the training phase and the test phase. However, the distribution of user and item features can naturally shift in real-world scenarios, potentially resulting in a substantial decrease in recommendation performance. This phenomenon can be formulated as an Out-Of-Distribution (OOD) recommendation problem. To address this challenge, we propose a novel \textbf{D}ual \textbf{T}est-\textbf{T}ime-\textbf{T}raining framework for \textbf{O}OD \textbf{R}ecommendation, termed \textbf{DT3OR}. In DT3OR, we incorporate a model adaptation mechanism during the test-time phase to carefully update the recommendation model, allowing the model to adapt specially to the shifting user and item features. To be specific, we propose a self-distillation task and a contrastive task to assist the model learning both the user's invariant interest preferences and the variant user/item characteristics during the test-time phase, thus facilitating a smooth adaptation to the shifting features. Furthermore, we provide theoretical analysis to support the rationale behind our dual test-time training framework. To the best of our knowledge, this paper is the first work to address OOD recommendation via a test-time-training strategy. We conduct experiments on five datasets with various backbones. Comprehensive experimental results have demonstrated the effectiveness of DT3OR compared to other state-of-the-art baselines.
\end{abstract}

\begin{IEEEkeywords}
Out-Of-Distribution; Recommender System; Test-Time-Training; User/item Feature Shift
\end{IEEEkeywords}

\section{Introduction}
\IEEEPARstart{R}{ecommender} systems play a crucial role in alleviating the information overload on social media platforms by providing personalized information filtering. In recent years, a plethora of recommendation algorithms have been proposed, including collaborative filtering~\cite{chen_cf, yq1, yq_cf,yin2024dataset,yqadaptive, liuyue_ELCRec, liuyue_ITR, lxy_llmrec, lxy_llmrec2}, graph-based recommendation~\cite{LightGCN,RMLRec, Rensslrec,DCCF,Fan_graph}, cross-domain recommendation~\cite{zhao_tkde, zhao2023cross,yin2023apgl4sr,yang2024hyperbolic} and etc.

\begin{figure}
\centering
\scalebox{0.3}{
\includegraphics{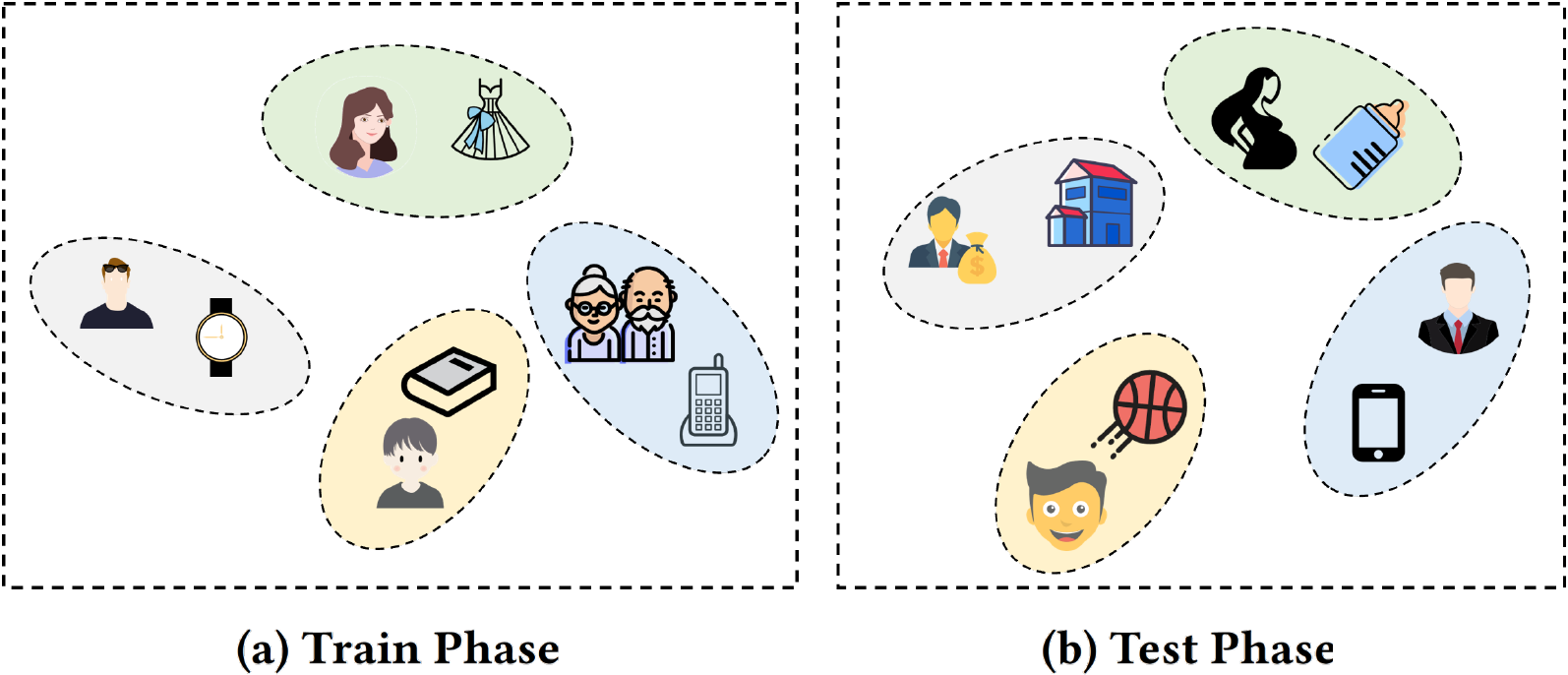}}
\caption{Illustration of OOD recommendation. The distribution shift of user and item features in test phase, leading to decline in recommendation performance.}
\label{motivation}
\end{figure}

Although promising performance has been achieved, most existing recommendation algorithms assume that the user and item distributions remain unchanged between the training phase and the test phase, adhering to the assumption of independent and identically distributed (I.I.D.) variables. However, these distributions often undergo shifts in many real-world scenarios, violating this assumption. We illustrate the scenario of distribution drift between the training set and the test set in Fig.~\ref{motivation}. The shifts primarily manifest in two aspects: user feature shift and item feature shift. {User feature shift refers specifically to the evolution of a user’s preferences over time as their individual characteristics change. Specifically, we consider that as individual characteristics evolve over time, there will be corresponding changes in user preferences. For example, as a boy matures, his interests may shift from comic books to sports. Similarly, as he grows older and experiences an increase in income, purchasing a house may become a significant focus during adulthood. In contrast, item feature shift pertains to changes in preference relationships caused by the replacement or updating of item characteristics. For instance, items frequently undergo modifications that alter their features, which may result in outdated user-item interactions and inappropriate recommendations. Consider a restaurant whose signature dishes vary with the seasons: in spring, fresh seasonal vegetables may dominate the menu; in summer, refreshing beverages; in autumn, an assortment of fruits; and in winter, hot pot dishes. Therefore, it is necessary to adapt the recommendation strategy to align with the changing characteristics of the restaurant (item). We demonstrate the user/item feature shift in Fig.~\ref{feature_shift}.} These shifts significant impact on recommendation performance~\cite{cor,causpref, AdvInfoNCE, AdvDrop}. Consequently, it is crucial to consider out-of-distribution (OOD) recommendation to address this issue.

\begin{figure}
\centering
\scalebox{0.45}{
\includegraphics{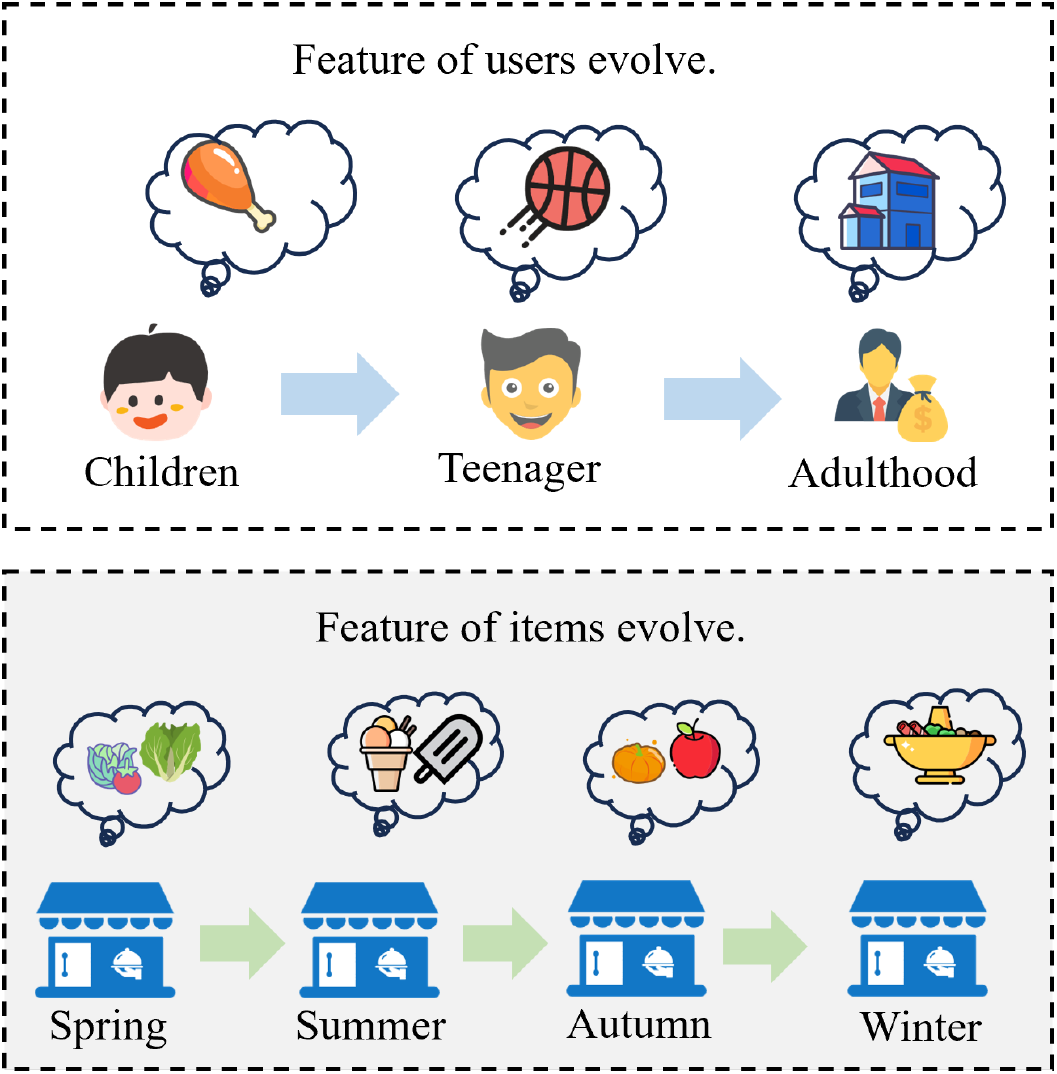}}
\caption{{Illustration of user/item feature shift.}}
\label{feature_shift}
\end{figure}

The out-of-distribution (OOD) problem has been extensively studied in various fields, including image classification~\cite{ood_classifi} and detection~\cite{ood_detection}. However, it has received limited attention in the context of recommendation systems. To address this issue, disentangled recommendation approaches~\cite{dis_rec, dis_rec1} aim to learn factorized representations of user preferences, which can enhance the model robustness against distribution shifts. Causal-based methods~\cite{cor,causpref} use causal learning to solve OOD problems. However, these methods often require interventions during the training process, which makes them less practical when only a pre-trained model is available. Another approach to address OOD recommendation is model retraining~\cite{retraining}, which involves adapting the model to the OOD environment. However, this approach often requires a substantial computational overhead~\cite{cor}. In contrast, our study aims to investigate practical approaches that can readily improve the generalization and robustness for a wide range of pre-trained models in different scenarios. Test-time training (TTT), a widely adopted strategy for solving OOD problems in various domains~\cite{ttt++}, serves as the inspiration for our proposed solution. The core concept of test-time training is to utilize auxiliary unsupervised or self-supervised learning (SSL) tasks to provide insights into the underlying characteristics of a test sample. This information is then used to fine-tune or calibrate the trained model specifically for that test sample. Since each sample has distinct properties, test-time training has significant potential to enhance the generalization capabilities of trained models.

To address the OOD recommendation challenge, we present a novel dual test-time-training strategy called DT3OR. This strategy leverages test-time training to adapt the model during the recommendation phase and improve its performance in OOD scenarios. In DT3OR, we introduce a model adaptation mechanism during the test-time phase to effectively update the recommendation model, thus allowing the model to adapt to the shift data. This mechanism consists of two key components to better understand the invariant preferences among users and the variant user/item features with shift data: the self-distillation task and the contrastive task. The self-distillation task aims to minimize the distance between user interest centers with the same preference, thereby improving the uniformity of user interest representations in the latent space. By encouraging the user interest centers to be more tightly clustered, the model enhances its proficiency in capturing the inherent patterns in user preference. Additionally, we incorporate a contrastive task that captures the correlations among users with similar preferences. This task involves selecting high-confidence samples from the same user interest center to establish meaningful relationships between users.

In summary, the main contributions of this work are summarized as follows:
\begin{itemize}

    \item In this work, we design test-time training strategies that enable the model to adapt to shifts in feature distributions, addressing the out-of-distribution (OOD) recommendation challenge.
            
    \item In our paper, two self-supervised strategies are introduced to capture users’ invariant preferences, i.e., self-distillation task and contrastive task. To the best of our knowledge, this is the first study to leverage test-time training for addressing the OOD recommendation problem.
    
    \item We provide a theoretical analysis of the effectiveness of the proposed test-time training strategy design.
      
    \item A comprehensive set of comparative experiments on five datasets confirms the effectiveness of our proposed approach relative to existing state-of-the-art methods.

\end{itemize}

\begin{figure}
\centering
\scalebox{0.6}{
\includegraphics{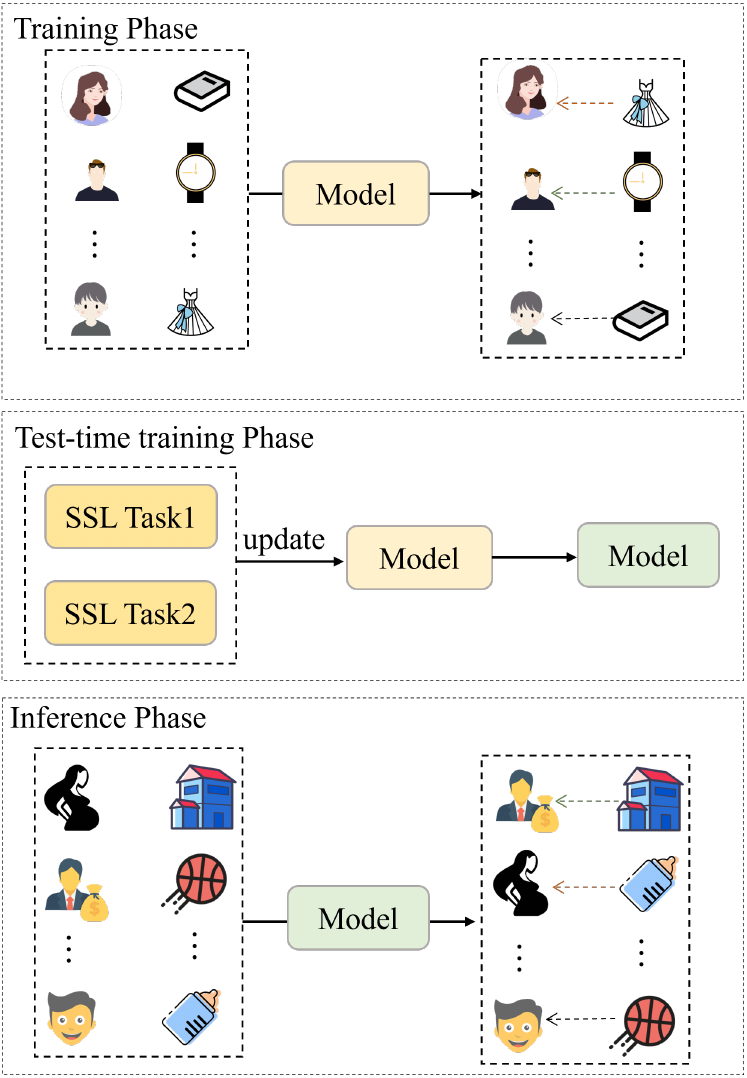}}
\caption{The overall framework for our proposed DT3OR.}
\label{overall}
\end{figure}

\begin{figure*}
\centering
\scalebox{0.6}{
\includegraphics{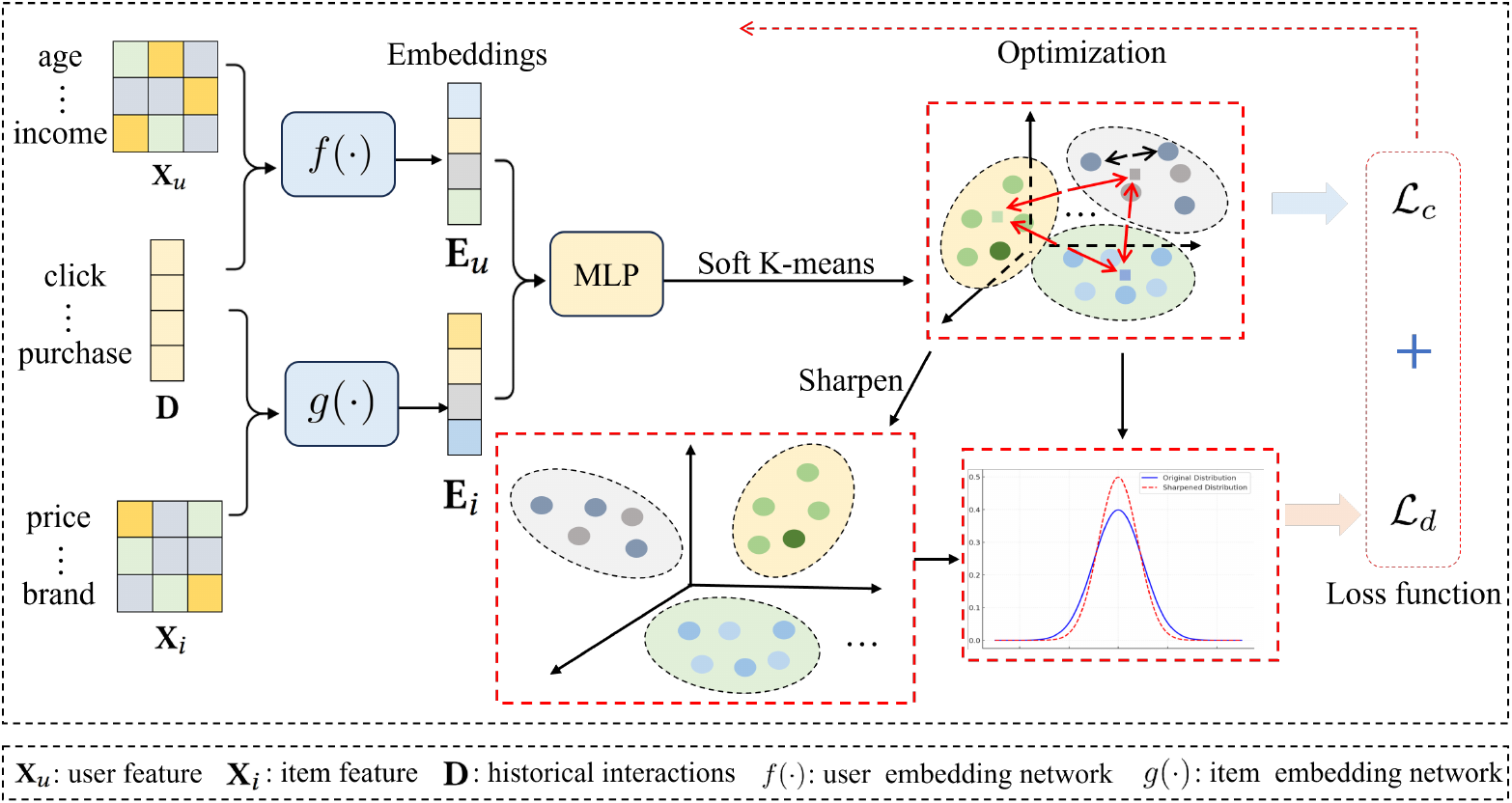}}
\caption{Illustration of the self-distillation task designing. In our proposed method, we initially obtain the user embedding $\textbf{E}_u$ and item embedding $\textbf{E}_i$ by encoding the user/item features and historical interactions with the training network. After this, we carry out K-means on the combined embedding to yield the clustering results. Subsequently, with the sharpen function, we implement self-distillation to enhance the user interest centers, thus improving the generalization of model to deal with shift features.}
\label{SSL1}
\end{figure*}

\section{Problem Definition}
\textbf{OOD Recommendation Scenario.}
In this work, our primary focus is on addressing the out-of-distribution (OOD) problem in the recommendation task. We consider a training dataset denoted as $\mathcal{D}_{Tr}$, and a test dataset denoted as $\mathcal{D}_{Te}$. The corresponding set of interactions are denoted as $\mathcal{Y}_{Tr}$ and $\mathcal{Y}_{Te}$. Moreover, in an OOD scenario, $\mathcal{D}_{Tr}$ and $\mathcal{D}_{Te}$ are disjoint, and they are drawn from the different distribution.

\begin{definition} [Test-time training for OOD Recommendation (T3OR).] T3OR requires the learning of a self-supervised learning task denoted as $h(\cdot)$ during the test phase. By leveraging a pre-trained network $f_{\theta}$, T3OR aims to improve the performance on the test data $\mathcal{D}_{Te}$:
\begin{equation}
\begin{aligned}
&\underset{h}{argmin} \mathcal{L}\left ( f_{\theta^*}\left ( h\left ( \mathcal{D}_{Te}\right ),\mathcal{Y}_{Te}  \right )  \right )\\
&s.t.~~h\left ( \mathcal{D}_{Te}  \right ) \in \mathcal{P}\left ( \mathcal{D}_{Te} \right )\\ 
&with~\theta^* =  \underset{\theta}{argmin} \mathcal{L}\left ( f_{\theta}\left ( \mathcal{D}_{Tr}\right ) ,\mathcal{Y}_{Tr}  \right ) 
\end{aligned}
\end{equation}
where $\mathcal{L}$ is the loss function to measure the performance of downstream task, i.e., recommendation. $\mathcal{P}\left ( \mathcal{D}_{Te} \right )$ is the set of self-supervised tasks.
\end{definition}

\section{Preliminary}
 {This work proposes test-time training strategies to assist the model in adapting to out-of-distribution scenarios during the test phase.
Let $\textbf{E}_u$ and $\textbf{E}_i$ represent the embeddings of users and items, respectively, obtained from the user embedding extraction network $f(\cdot)$ and the item embedding extraction network $g(\cdot)$ in the latent space. $\textbf{E}$ denotes the fused embeddings of users and items. Additionally, $\textbf{D}$ represents the historical interactions. We use $\textbf{Q}$ and $\textbf{P}$ to denote the original and sharpened distributions of pseudo-labels generated through the clustering operation. $\textbf{R}$ denotes the disjoint clusters, while $\textbf{C}$ represents the user preference clusters. The basic notations are summarized in Tab.~\ref{notation}.}

\section{Methodology}
We design a novel framework termed DT3OR for recommendation under an OOD scenario in this section. {The framework of DT3OR is shown in Fig. \ref{overall}.} The primary obstacle in optimizing the DT3OR problem lies in constructing a surrogate loss that can guide the learning process, particularly considering the lack of ground-truth interactions for the test data. To be specific, we first delve into the designed self-supervised learning tasks, which include the self-distillation task and the contrastive task. Then, we give the definition of loss function in DT3OR.

\subsection{Self-supervised Learning Tasks Design}

Our proposed DT3OR is designed to improve model generalizability and robustness to handle OOD recommendations by learning from test data. Ideally, if we had the ground-truth labels for the test data, we could refine the model to minimize the cross-entropy loss on the test data. However, due to the absence of such label information at the test-time, we are motivated to explore suitable surrogate loss functions to steer the model without the presence of labeled data. In the unsupervised scenario, self-supervised learning (SSL) techniques~\cite{DealMVC, CodingNet,xihong} have shown great potential, laying the groundwork for self-supervision during test-time training. The effectiveness of test-time training is heavily dependent on the quality SSL task~\cite{ttt++}. However, SSL tasks commonly used in image-related tasks, such as rotation invariance, may not be directly applicable to recommendation data due to its distinct characteristics. Therefore, it is vital to build up a reasonable SSL task based on the characteristics of recommendation scenarios. {In recommendation, user preferences often exhibit clustering and homogeneity, where users with similar purchasing behaviors tend to have comparable profiles and are likely to be interested in similar new items.} Inspired by this intuition, we designed a dual self-supervised task to assist the model to adapt to the shift distribution in test-time phase, achieving improved recommendation performance on the test set. In following, we design two SSL tasks that are suitable for recommendation tasks. {The framework for designed SSL tasks is illustrated in Fig.~\ref{SSL1}.}

\begin{table}[]
\centering
\caption{{Notation Summary.}}
\begin{tabular}{c|c}
\hline
\textbf{{Notation}} & \textbf{{Meaning}}                 \\ \hline
{$\textbf{X}_u \in \mathbb{R}^{N_U \times d'}$}                & {Features of users}                \\
{$\textbf{X}_i \in \mathbb{R}^{N_I \times d'}$}               & {Features of items}                \\
{$\textbf{D} \in \mathbb{R}^{N_D \times d}$}                 & {Historical interactions}          \\
{$\textbf{E}_u \in \mathbb{R}^{N_D \times d}$}                & {Embeddings of users}             \\
{$\textbf{E}_i \in \mathbb{R}^{N_D \times d}$}                & {Embeddings of items}              \\
{$\textbf{E} \in \mathbb{R}^{N_D \times 2d}$}                & {Fused embeddings of items and users.}            \\
{$\textbf{Q} \in \mathbb{R}^{N_D \times K}$}                 & {Distribution of pseudo labels}    \\
{$\textbf{P} \in \mathbb{R}^{N_D \times K}$}                 & {Sharpened distribution}           \\
{$\textbf{R}_i \in \mathbb{R}^{N_D}$}                & {Clusters of user preference}      \\
{$\textbf{C}_i \in \mathbb{R}^{K}$}                 & {Cluster centers}                  \\
{$f(\cdot)$}                & {user embedding extracted network} \\
{$g(\cdot)$}                  & {item embedding extracted network} \\ \hline
\end{tabular}
\label{notation}
\end{table}

\subsubsection{Self-distillation Task Design}
In this subsection, we propose a test-time self-distillation task to make better leverage of the test data. To be specific, we first extract the embeddings of user/item features and historical interactions {by the pretrained user embedding extracted network $f(\cdot)$ and item embedding extracted network $g(\cdot)$ in the training phase:}
\begin{equation}
\begin{aligned}
\textbf{E}_u &= f(\textbf{X}_u, \textbf{D})\\
\textbf{E}_i &= g(\textbf{X}_i, \textbf{D}),
\end{aligned}
\label{user_item_embed}
\end{equation}
where $\textbf{X}_u$ and $\textbf{X}_i$ denote the user features and item features, respectively. $\textbf{D}$ is the historical interactions. $\textbf{E}_u$ and $\textbf{E}_i$ capture the preferences of users towards items. In a recommendation system, historical interactions and user features are valuable indicators of user preferences~\cite{VAERec}. {Users' purchasing behaviors and their interactions with items offer meaningful insights into their underlying preferences.}

Then, we fuse the user embeddings $\textbf{E}_u$ and item embeddings $\textbf{E}_i$ by:
\begin{equation}
\begin{aligned}
\textbf{E} &= \gamma(\textbf{E}_u \oplus \textbf{E}_i),
\end{aligned}
\label{MLP_embed}
\end{equation}
where $\oplus$ denotes the the concatenation operation, $\gamma(\cdot)$ denotes a multi-layer perceptron (MLP). After that, we implement soft K-means algorithm on $\textbf{E}$. {The prediction probability distribution of the fused embedding $\textbf{E}$ could be obtained by:}
\begin{equation}
\begin{aligned}
q &= \varphi(\textbf{E}).
\end{aligned}
\label{clustering}
\end{equation}

{Here, $\varphi(\cdot)$ represents the soft K-means algorithm, which calculates the assignment probability of each sample to clusters based on its distance from the cluster centers in the latent space. The variable $q$ can be interpreted as the probability of assigning a sample to a specific cluster, indicating a soft assignment.} We can consider the matrix $\textbf{Q}=[q]$ as the distribution of assignments for all samples.

After obtaining the distribution of the clustering result $\textbf{Q}$, we utilize self-distillation to optimize the network by leveraging a lower entropy assignments. In particular, we employ a sharpening function to diminish the entropy of the clustering result distribution $\textbf{Q}$. We determine the target distribution $\textbf{P}$ through the following calculation:

\begin{equation}
\begin{aligned}
\textbf{P}=S(q,T)_{i} &= \frac{q_i^{\frac{1}{T}}}{\sum_{j}q_j^{\frac{1}{T}}}.
\end{aligned}
\label{sharpen_func}
\end{equation}
where $S(\cdot)$ represents the sharpening function, $T$ is a temperature parameter, and $q$ denotes input probability distribution, which corresponds to the clustering result distribution in this paper. As the temperature $T$ approaches zero, the output of $S(q,T)$ tends to approximate a Dirac distribution (one-hot encoding). A lower temperature encourages the generation of a lower-entropy distribution. Consequently, we can formulate the self-distillation objective function as follows:
\begin{equation}
\begin{aligned}
\mathcal{L}_d &= KL(\textbf{P}||\textbf{Q})=\sum_ip_ilog\frac{p_i}{q_i}.
\end{aligned}
\label{kl_loss}
\end{equation}

By minimizing the KL divergence loss between the $\textbf{P}$ and $\textbf{Q}$ distributions, the target distribution $\textbf{P}$ aids the model in acquiring better embeddings, given the low entropy of $\textbf{P}$. This process can be regarded as a self-distillation mechanism, considering that $\textbf{P}$ is derived from $\textbf{Q}$. Furthermore, the $\textbf{P}$ distribution guides the update of the $\textbf{Q}$ distribution in a supervisory capacity.

Through clustering the fused embeddings of users and items, we obtain interest centers that represent the preferences of different types of users, where each type of user shows a preference for a specific type of item. The self-distillation mechanism designed in this paper enhances the prominence of these interest centers, allowing for better understanding and modeling of user preferences.

\begin{figure}
\centering
\scalebox{0.3}{
\includegraphics{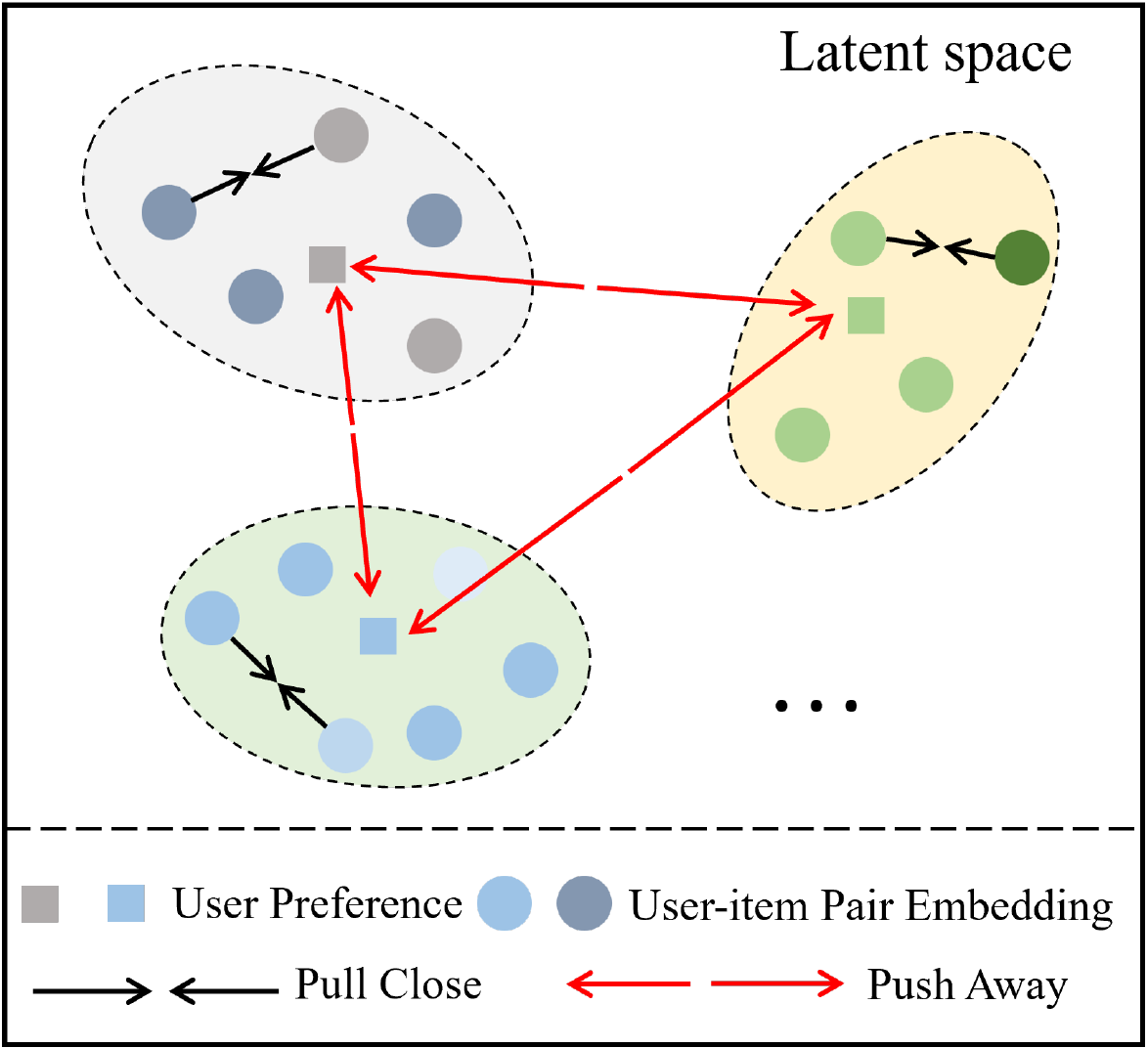}}
\caption{Illustration of the contrastive task designing. With the high confidence results of clustering, we select the samples in the same interest center as positive sample, while regarding the different high confidence interest centers as negative samples.}
\label{SSL2}
\end{figure}

\subsubsection{Contrastive Task Design}
{Contrastive learning has surfaced as a significant self-supervised framework, demonstrating promising performance across various tasks~\cite{SIMCLR,BYOL,GTrans}.} Data augmentation is pivotal in contrastive learning. However, inappropriate data augmentation can lead to semantic drift, thereby degrading the quality of positive samples~\cite{SCAGC}. {Additionally, existing strategies for constructing negative samples often consider all non-positive samples as negative, which can inadvertently introduce false-negative samples~\cite{GDCL}. Therefore, we aim to design a contrastive task with more discriminative positive samples and reliable negative samples. {The specific process is shown in Fig.~\ref{SSL2}} The designed contrastive task contains two steps.}

Firstly, we construct the positive sample pairs. Based on the clustering distribution $q$ with Eq.~\ref{clustering}, we identify the high confidence sample indexes as:
\begin{equation}
\begin{aligned}
h = \text{Top}(q), 
\end{aligned}
\label{high_index}
\end{equation}
where $\text{Top}(\cdot)$ is used to select the indices of the top $\tau$ high-confidence samples in the clustering distribution $q$. According to these indices, we denote these high-confidence embeddings as $\textbf{E}_h$. Subsequently, based on the pseudo labels derived from the clustering operation in Eq.~\eqref{clustering}, we partition $\textbf{E}_{h}$ into $K$ disjoint clusters $\textbf{R}_i$, which can be represented as:

\begin{equation}
\begin{aligned}
\textbf{R}_i = \text{Div}({\textbf{E}_{h}}), 
\end{aligned}
\label{cluster}
\end{equation}
where $\text{Div}(\cdot)$ is the division function. $i \in [1,K]$. By this way, the high-confidence embeddings could be divided into different clusters, which denotes different preferences in user-item pairwise level. We consider the samples in the same clusters as positive samples. In other words, the same preference relationship (user-like-item) is considered a positive samples.

After that, we calculate the centers of high confidence embeddings to construct negative samples:
\begin{equation}
\begin{aligned}
\textbf{C}_i = \frac{1}{K}\sum_{i=1}^K{\textbf{R}_i},
\end{aligned}
\label{centers}
\end{equation}
where $\textbf{C}_i$ represents different user preference centers. We consider different high-confidence centers as negative samples, as they correspond to different interest centers. In this way, we could reduce the possibility of false-negative samples.

In conclusion, utilizing high-confidence clustering information as a form of supervised information could boost the quality of both positive and negative samples, thereby resulting in enhanced performance in contrastive tasks.

\begin{algorithm}[t]
\small
\caption{\textbf{Dual Test-Time Training for OOD Recommendation }}
\label{ALGORITHM}
\flushleft{\textbf{Input}: The latest shift user/item features $\textbf{X}_u, \textbf{X}_i$, the historical interactions $\textbf{D}$}\\
\vspace{-5pt}
\flushleft{\textbf{Output}: The interaction probability $\textbf{D}'$.} 
\begin{algorithmic}[1]
\vspace{-10pt}
\FOR{$e=1$ to $epoch$}
\STATE Acquire the embedding of user and item respectively with Eq.~\eqref{user_item_embed}.
\STATE Fuse $\textbf{E}_u$ and $\textbf{E}_i$ by MLP with Eq.~\eqref{MLP_embed}.
\STATE Obtain the clustering result with K-means with Eq.~\eqref{clustering}.
\STATE Compute the self-distillation loss with Eq.~\eqref{kl_loss}.
\STATE Compute the contrastive loss with Eq.~\eqref{con_loss}.
\STATE Optimize the network $f_\theta(\cdot)$ through minimizing $\mathcal{L}$ in Eq.~\eqref{total_loss}.
\ENDFOR
\STATE Calculate the interaction probability $\textbf{D}'$. 
\end{algorithmic}
\end{algorithm}

\subsection{Objective function}\label{complex_loss}
We define the loss function of our DT3OR in OOD recommendation in this subsection. DT3OR jointly optimizes two components, i.e., the self-distillation loss $\mathcal{L}_d$ and contrastive loss $\mathcal{L}_c$.

In detail, we utilize the Mean Squared Error loss between positive samples as the positive loss:
\begin{equation}
\begin{aligned}
\mathcal{L}_p = \frac{1}{KN(N-1)}{\sum_{m=1}^K\sum_{i=1}^N\sum_{j=i+1}^N{(\textbf{R}^{m}_{i} - \textbf{R}^{m}_{j})^2}},
\end{aligned}
\label{pos}
\end{equation}
where $K$ is utilized to denote the number of interest clusters. $\textbf{R}^{m}_{i}$ and $\textbf{R}^{m}_{j}$ denotes the embeddings of different sample in the same cluster $\textbf{R}^{m}$. Besides, we use $N$ to present the number of high-confidence samples.

Moreover, we implement the negative sample loss with cosine similarity among different centers of the selected high-confidence embeddings. The initial step involves computing the cosine similarity between different centers:

\begin{equation}
\begin{aligned}
sim(\textbf{C}_i,\textbf{C}_j) = \frac{\textbf{C}_i\cdot \textbf{C}_j}{||\textbf{C}_i||_2||\textbf{C}_j||_2}.
\end{aligned}
\end{equation}

Then, we maximize the cosine similarity for different centers as:
\begin{equation}
\begin{aligned}
\mathcal{L}_n = \frac{1}{K^2-K}\sum_{i=1}^K\sum_{j=i+1}^K {{sim(\textbf{C}_i,\textbf{C}_j)}} ,
\end{aligned}
\label{neg}
\end{equation}
where $\textbf{C}_i$ and $\textbf{C}_j$ denote different high confidence center. In summary, the contrastive loss $\mathcal{L}_c$ is:
\begin{equation}
\begin{aligned}
\mathcal{L}_c = \mathcal{L}_p + \mathcal{L}_n.
\end{aligned}
\label{con_loss}
\end{equation}

The whole loss of DT3OR can be presented as follows:
\begin{equation}
\begin{aligned}
\mathcal{L} = \mathcal{L}_d + \alpha \mathcal{L}_c,
\end{aligned}
\label{total_loss}
\end{equation}
where we use $\alpha$ to represent the trade-off hyper-parameter between $\mathcal{L}_d$ and  $\mathcal{L}_c$. 

The first term in Eq.~\eqref{total_loss} promotes the distribution of user interest centers closer together, enhancing the uniformity of user interest representations in the latent space. {We first obtain the probability distribution $\textbf{P}$ of users' preference centers based on the fused user-item embeddings through a clustering operation. Then, we apply the sharpen function to generate a lower-entropy distribution $\textbf{Q}$. This distribution reflects the consistency of user preferences in the latent space, indicating that a group of users shares a similar preference for a specific category of items. By minimizing the KL divergence between $\textbf{P}$ and $\textbf{Q}$, we bring the distributions of user interest centers closer together.} The second term captures the correlations among users with similar preferences. Here, we use ${B}$ to represent the batch size. Furthermore, the embeddings' dimension is set to $d$, and $N$ as the number of high-confidence samples. Consequently, we could obtain time complexity of the self-distillation loss and contrastive loss is $\mathcal{O}(Bd)$ and $\mathcal{O}(N^2d)$, respectively. Therefore, the total time complexity of the loss is $\mathcal{O}(Bd+N^2d)$. The space complexity of our loss is $\mathcal{O}(B+N^2)$. The comprehensive learning process of DT3OR is outlined in Algorithm.~\ref{ALGORITHM}.

\section{Theoretical analysis}
We explore the rationality of test-time training for recommendation from a theoretical perspective in this section. For the sake of convenience, we establish the following notation. Based on Eq.~\eqref{ce_loss}, let $\mathcal{L}(u,i,y;\theta)$ denote the cross-entropy loss between the ground-truth $y_i$ and the prediction $\hat{y}_i$. Let $\mathcal{L}_{SUP}(u,i,y;\theta)$ denote a supervised task loss on a instance $(u,i,y)$ with parameters $\theta$, e.g., recommendation task, and $\mathcal{L}_{SSL}(u,i;{\bm \theta})$ denotes a self-supervised task loss on a instance $(u,i))$ also with parameters $\bm \theta$, e.g., the designing self-supervised tasks in this paper. Here, we use the commonly recommendation loss function, i.e., CE loss, to demonstrate, which can be presented as:
\begin{equation}
 \mathcal{L}_{CE}=-\frac{1}{N}\sum_{i=1}^N y_{i}\text{log}(\hat{y}_i),
\label{ce_loss}
\end{equation}
where $y_i$ denotes the ground-truth. $\hat{y}_i$ is the prediction of the model embedding, which can be presented as:
\begin{equation}
\begin{aligned}
    \hat{y}_i &= softmax(f(u,i))\\
    &=softmax(g_i)\\
    &=\frac{e^{g_i}}{\sum_{j=1}^Ne^{g_j}}
\end{aligned}
\label{softmax}
\end{equation}
where $f(\cdot)$ is the recommendation model, $y_i$ is defined as the variable indicating whether the user has interacted with an item, while $\hat{y}_i$ represents the predicted probability of interaction between the user and the item by the model.

\begin{theorem}
$\mathcal{L}(u,i,y;\theta)$ is convex and $\beta$-smooth with respect to ${\theta}$, and the magnitude of the gradient $|\nabla{\bm \theta}\mathcal{L}(u,i,y;\theta) |$ is bounded by a positive constant $B^{'}$ for all ${\bm \theta}$.
\label{the_1}
\end{theorem}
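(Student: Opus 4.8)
The plan is to establish the three claimed properties by first reducing everything to the behaviour of the softmax cross-entropy as a function of the logit (score) vector $g=f(u,i)$, and only afterwards transferring those properties to $\theta$ via the chain rule. The starting point is that, writing $y$ for the one-hot ground-truth label and using the softmax form of Eq.~\eqref{softmax}, the per-instance loss admits the log-sum-exp representation
\begin{equation}
\mathcal{L}(u,i,y;\theta) = -g_{y} + \log\sum_{j=1}^{N} e^{g_j},
\end{equation}
where $g_y$ is the logit at the true class. All of the analysis then rests on standard, well-understood properties of the log-sum-exp function composed with the scoring map $\theta \mapsto g$.

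For convexity, I would observe that $-g_y$ is linear in $g$ while $\log\sum_j e^{g_j}$ is the log-sum-exp function, which is convex in $g$; hence $\mathcal{L}$ is convex in $g$ unconditionally. Under the assumption standard in this line of test-time-training analyses that the scoring map $\theta\mapsto g=f(u,i)$ is affine in $\theta$ (equivalently, that we analyse the final linear scoring layer), convexity is preserved under precomposition with an affine map, which yields convexity in $\theta$.

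For $\beta$-smoothness and the gradient bound I would compute the first two derivatives in logit space explicitly. The gradient is the familiar residual $\nabla_g\mathcal{L} = \hat{y}-y$, whose Euclidean norm is bounded by a universal constant since both $\hat{y}$ and $y$ lie in the probability simplex ($\|\hat{y}-y\|_2 \le \|\hat{y}-y\|_1 \le 2$). The Hessian is $\nabla^2_g\mathcal{L} = \operatorname{diag}(\hat{y}) - \hat{y}\hat{y}^{\top}$, which is exactly the covariance matrix of the categorical distribution $\hat{y}$; it is therefore positive semidefinite with trace $1-\|\hat{y}\|_2^2 \le 1$, so its spectral norm is at most $1$. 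Transferring through the Jacobian $J=\partial g/\partial\theta$, whose norm is bounded whenever the user/item features fed to $f$ are bounded, gives $\nabla_\theta\mathcal{L} = J^{\top}(\hat{y}-y)$ with $\|\nabla_\theta\mathcal{L}\| \le \|J\|\,\|\hat{y}-y\| =: B'$, and $\nabla^2_\theta\mathcal{L} = J^{\top}\bigl(\operatorname{diag}(\hat{y})-\hat{y}\hat{y}^{\top}\bigr)J$ with spectral norm at most $\|J\|^2 =: \beta$.

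The main obstacle is the convexity claim: it holds unconditionally in logit space but only transfers to $\theta$ when $f$ is affine in $\theta$, which fails for a genuinely nonlinear recommendation backbone, since the chain rule then contributes an extra curvature term involving $\nabla^2_\theta f$ to the Hessian. I would therefore make the affine-in-$\theta$ (or last-layer) assumption explicit; under it the second-derivative-of-$f$ term vanishes, which is precisely what both guarantees convexity and keeps the $\beta$-smoothness bound free of curvature contributions from $f$ itself, so that all three properties follow cleanly from the bounds on $\nabla_g\mathcal{L}$, $\nabla^2_g\mathcal{L}$, and $\|J\|$.
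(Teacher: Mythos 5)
Your proposal is correct, and in places it is more careful than the paper's own argument, though both share the same skeleton: reduce everything to properties of softmax cross-entropy as a function of the score vector $g=f(u,i)$. Where you differ is in the key lemmas. The paper computes the Hessian entrywise ($\mathcal{M}_{ij}=-\hat{y}_i\hat{y}_j$ for $i\neq j$, $\mathcal{M}_{ii}=\hat{y}_i-\hat{y}_i^2$) and proves positive semidefiniteness and a spectral bound of $2$ by two applications of the Cauchy--Schwarz inequality, then bounds the gradient norm by writing out the residual coordinates; you instead get convexity in $g$ immediately from the log-sum-exp representation (linear term plus a convex function) and recognize the Hessian as the covariance matrix $\operatorname{diag}(\hat{y})-\hat{y}\hat{y}^{\top}$ of the categorical distribution, which yields PSD-ness for free and the sharper spectral bound of $1$ via the trace. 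These are cleaner routes to the same three facts. The more substantive difference is the last step: the theorem is stated with respect to $\theta$, yet the paper's proof establishes convexity, smoothness, and the gradient bound only with respect to $g$ and stops there, implicitly identifying the two. You make the missing transfer explicit --- convexity survives precomposition only when $\theta\mapsto g$ is affine (a last-layer or linear-scoring assumption), and the smoothness and gradient bounds additionally require $\|\partial g/\partial\theta\|$ bounded --- and you correctly observe that for a genuinely nonlinear backbone the curvature term $\nabla^2_{\theta}f$ would destroy convexity in $\theta$. That assumption appears nowhere in the paper, so your version does not merely restate the published proof; it closes a real gap in it.
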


\begin{proof}With the inner product of embeddings $g = f(u, i)$ and based on Eq.~\eqref{ce_loss}, we could have the following formula:

\begin{equation}
\begin{aligned}
\mathcal{L}(y;g) &= -\frac{1}{N}\sum_{i=1}^N(y_i\text{log}(\hat{y}_i)\\
&= -\frac{1}{N}\sum_{i=1}^N(y_i\text{log}(\frac{e^{g_i}}{\sum_{j=1}^Ne^{g_j}}))
\end{aligned}
\label{ce_soft}
\end{equation}

where $\hat{y}_i$ is the prediction of the recommendation model. Then, we calculate the Hessian Matrix of Eq.~\eqref{ce_soft}:
\begin{equation}
\begin{aligned}
if~\text{i}\neq\text{j},&~\mathcal{\bf M}_{ij}= -\hat{y}_i\hat{y}_j,\\
if~\text{i}=\text{j}, &~\mathcal{\bf M}_{ij}=\hat{y}_i-\hat{y}_i^2.
\nonumber
\end{aligned}
\label{eq:hessian}
\end{equation} 

Based on Eq.~\eqref{eq:hessian}, for any ${\bf c} \in \mathbb{R}^N$, we could have:
\begin{align}
\begin{aligned}
    {\bf c}^{\mathsf{T}}\mathcal{\bf M}{\bf c} &= \sum_{i=1}^N\hat{y}_i{\bf c}_i^2 - (\sum_{i=1}^N{{\bf c}_i}\hat{y}_i)^2 \nonumber\\ 
    &=  \sum_{i=1}^N\hat{y}_i{\bf c}_i^2 - (\sum_{i=1}^{N}{\bf c}_i\sqrt{\hat{y}_i}\sqrt{\hat{y}_i})^2
\nonumber    
\end{aligned}
\end{align}

Based on Cauchy-Schwarz Inequality~\cite{cauchy}, we scale the above equation:
\begin{align}
(\sum_{i=1}^N{\bf c}_i\sqrt{\hat{y}_i}\sqrt{\hat{y}_i})^2 \leq (\sum_{i=1}^N{{\bf c}_i}^2\hat{y}_i)(\sum_{i=1}^N\hat{y}_i).
\nonumber
\end{align}

Therefore, 
\begin{align}
    {\bf c}^{\mathsf{T}}\mathcal{\bf M}{\bf c} &\geq \sum_{i=1}^N\hat{y}_i{\bf c}_i^2 - (\sum_{i=1}^{N}{{\bf c}_i}^2\hat{y}_i)(\sum_{i=1}^N \hat{y}_i).
\nonumber
\end{align}

Since $\hat{y}_i$ is obtained by the softmax, $\sum_{i=1}^N\hat{y}_i = 1.$. Therefore:
\begin{align}
    {\bf c}^{\mathsf{T}}\mathcal{\bf M}{\bf c} &\geq 0.
\nonumber
\end{align}

Therefore, we have successfully demonstrated that the Hessian Matrix $\mathcal{\bf M}$ of Eq.~\eqref{ce_soft} is positive semi-definite (PSD). This proof is equivalent to establishing the convexity of $\mathcal{L}(y;{\bf g})$ with respect to $\bf g$.

For any ${\bf r} \in \mathbb{R}^N$ and $ \|{\bf r}\| = 1$, we could obtain:
\begin{equation}
\begin{aligned}
   {\bf r}^{\mathsf{T}}\mathcal{\bf M}{\bf r} &= \sum_{i=1}^N\hat{y}_i{\bf r}_i^2 - (\sum_{i=1}^N{{\bf r}_i}\hat{y}_i)^2 \nonumber\\  
   &=  \sum_{i=1}^N\hat{y}_i{\bf r}_i^2 - (\sum_{i=1}^{N}{\bf r}_i\sqrt{\hat{y}_i}\sqrt{\hat{y}_i})^2\\
    & \leq \sum_{i=1}^N{\bf r}_i^2 + (\sum_{i=1}^{N}{\bf r}_i\sqrt{\hat{y}_i}\sqrt{\hat{y}_i})^2. 
    \nonumber    
\end{aligned}
\end{equation}

Similarly, with Cauchy-Schwarz Inequality~\cite{cauchy}, we have: 
\begin{equation}
\begin{aligned}
    {\bf r}^{\mathsf{T}}\mathcal{\bf M}{\bf r} 
    & \leq \sum_{i=1}^N{\bf r}_i^2 + (\sum_{i=1}^{N}{\bf r}_i\sqrt{\hat{y}_i}\sqrt{\hat{y}_i})^2\\  &\leq  \sum_{i=1}^N{\bf r}_i^2 + (\sum_{i=1}^{N}{\bf r}_i^2)(\sum_{i=1}^{N}\hat{y}_i).
\nonumber    
\end{aligned}
\end{equation}

Since $ \|{\bf r}\| = 1$ and $\sum_{i=1}^N\hat{y}_i = 1$, we have 
\begin{align}
    {\bf r}^{\mathsf{T}}\mathcal{\bf M}{\bf r} 
    & \leq 2.
    \nonumber
\end{align}

Therefore, we have successfully proved that the eigenvalues of the Hessian Matrix $\mathcal{\bf M}$ are bounded by 2. This proof is equivalent to establishing the $\beta$-smoothness of $\mathcal{L}(y;{\bf g})$ with respect to $\bf g$.

After that, we calculate the first-order gradient of of $\mathcal{L}(y;{\bf g})$ over $\bf{g}$:
\begin{equation}
\begin{aligned}
&if~\text{i=c},\nabla_{\bf g}{\mathcal{L}(u, i, {y};{\bf g}})=-1 + \hat{y}_i,\\
&if~\text{i} \neq\text{c},\nabla_{\bf g}{\mathcal{L}(u, i, {y};{\bf g}})=\hat{y}_i.
\end{aligned}
\label{eq:first-order-gra} 
\end{equation}

We implement $\ell_2$ on Eq.~\eqref{eq:first-order-gra}:
\begin{equation}
\begin{aligned}
    \| \nabla_{\bf g}{\mathcal{L}({y};{\bf g}}) \| &= \sqrt{(-1+\hat{y}_c)^2+ \sum_{i\neq c}(\hat{y}_i)^2} \\ 
    &= \sqrt{\sum_{i\neq c}(\hat{y}_i)^2 + \sum_{i\neq c}(\hat{y}_i)^2} \\
    &\leq 2.
    \nonumber    
\end{aligned}
\end{equation}

Consequently, we have established that the $\ell_2$ norm of the first-order gradient of $\mathcal{L}(y;{\bf g})$ is bounded by a positive constant.

By demonstrating the convexity and $\beta$-smoothness of $\mathcal{L}(y;{\bf g})$ for all ${\bf g}$ and $y$, alongside the condition $|\nabla_{\bf g}\mathcal{L}(y;{\bf g}) | \leq B$ for all ${\bf g}$, where $B$ is a positive constant, we have successfully completed the proof of Theorem 1.

\end{proof}

\begin{theorem}
Assume that for all $(u,i,y)$, $\mathcal{L}_{SUP}(u,i,y;{\bm \theta})$ is differentiable, convex and $\beta$-smooth in ${\bm \theta}$, and both $\|\nabla_{{\bm \theta}}\mathcal{L}_{SUP}(u,i,y;{\bm  \theta}) \|,$\\
$ \|\nabla_{{\bm \theta}}\mathcal{L}_{SSL}(u,i;{\bm  \theta}) \| \leq B$ for all ${\bm \theta}$, where $B$ is a positive constant. With a fixed learning rate $\delta  = \frac{\epsilon}{\beta B^2}$, for every $(u,i,y)$ such that
\begin{equation}
\left \langle \nabla_{{\bm \theta}}\mathcal{L}_{SUP}(u,i,y;{\bm  \theta}),\nabla_{{\bm \theta}}\mathcal{L}_{SSL}(u,i;{\bm  \theta})  \right \rangle >\epsilon,
\nonumber 
\end{equation}
Thus,
\begin{equation}
\mathcal{L}_{SUP}(u,i,y;{\bm \theta}) > \mathcal{L}_{SUP}(u,i,y;{\bm \theta}(u,i)),
\nonumber 
\end{equation}
where ${\bm \theta}(u,i) = {\bm \theta} - \delta  \nabla_{{\bm \theta}}\mathcal{L}_{SSL}(u,i;{\bm  \theta})$.
\label{the_2}
\end{theorem}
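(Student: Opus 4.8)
The plan is to show that a single gradient-descent step on the self-supervised loss strictly decreases the supervised loss, whenever the two gradients are positively correlated. The key structural idea is to use the $\beta$-smoothness of $\mathcal{L}_{SUP}$ to upper-bound $\mathcal{L}_{SUP}(u,i,y;{\bm \theta}(u,i))$ by a quadratic model centered at ${\bm \theta}$, and then verify that the correlation hypothesis forces this upper bound to lie strictly below $\mathcal{L}_{SUP}(u,i,y;{\bm \theta})$.

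First I would invoke the standard consequence of $\beta$-smoothness: for any two points, $\mathcal{L}_{SUP}({\bm \theta}') \le \mathcal{L}_{SUP}({\bm \theta}) + \langle \nabla_{{\bm \theta}}\mathcal{L}_{SUP}({\bm \theta}), {\bm \theta}' - {\bm \theta}\rangle + \frac{\beta}{2}\|{\bm \theta}' - {\bm \theta}\|^2$. Substituting the update rule ${\bm \theta}(u,i) - {\bm \theta} = -\delta\,\nabla_{{\bm \theta}}\mathcal{L}_{SSL}(u,i;{\bm \theta})$ turns the inner-product term into $-\delta\langle \nabla_{{\bm \theta}}\mathcal{L}_{SUP}, \nabla_{{\bm \theta}}\mathcal{L}_{SSL}\rangle$ and the quadratic term into $\frac{\beta}{2}\delta^2\|\nabla_{{\bm \theta}}\mathcal{L}_{SSL}\|^2$. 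The goal then reduces to proving that the sum of these two terms is strictly negative.

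Next I would bound each piece using the two hypotheses. The correlation assumption gives $-\delta\langle \nabla_{{\bm \theta}}\mathcal{L}_{SUP}, \nabla_{{\bm \theta}}\mathcal{L}_{SSL}\rangle < -\delta\epsilon$, while the gradient bound $\|\nabla_{{\bm \theta}}\mathcal{L}_{SSL}\| \le B$ gives $\frac{\beta}{2}\delta^2\|\nabla_{{\bm \theta}}\mathcal{L}_{SSL}\|^2 \le \frac{\beta}{2}\delta^2 B^2$. Adding these, the decrease in loss is bounded above by $-\delta\epsilon + \frac{\beta}{2}\delta^2 B^2$. Now I would substitute the prescribed step size $\delta = \frac{\epsilon}{\beta B^2}$; the first term becomes $-\frac{\epsilon^2}{\beta B^2}$ and the second becomes $\frac{\epsilon^2}{2\beta B^2}$, so their sum equals $-\frac{\epsilon^2}{2\beta B^2} < 0$. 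This strict negativity yields exactly $\mathcal{L}_{SUP}(u,i,y;{\bm \theta}(u,i)) < \mathcal{L}_{SUP}(u,i,y;{\bm \theta})$, completing the argument.

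The main obstacle I anticipate is not the algebra but making sure the smoothness inequality is applied correctly: it is the descent lemma, which requires $\beta$-smoothness (a Lipschitz gradient condition) rather than mere convexity, so I would be careful to note that convexity is not actually used in the strict-decrease conclusion here — it is the smoothness bound together with the positive correlation that does all the work. A secondary subtlety is confirming that the choice $\delta = \frac{\epsilon}{\beta B^2}$ is precisely the minimizer of the quadratic $-\delta\epsilon + \frac{\beta}{2}\delta^2 B^2$ in $\delta$, which is why it produces the cleanest bound; any smaller positive step would also work but would not give the tight $-\frac{\epsilon^2}{2\beta B^2}$ value. I would also remark that the bound on $\|\nabla_{{\bm \theta}}\mathcal{L}_{SUP}\|$ is not needed for this particular inequality, though it is assumed symmetrically in the statement.
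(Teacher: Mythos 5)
Your proof is correct, but it takes a more direct route than the paper's. Both arguments start from the same descent lemma: by $\beta$-smoothness, $\mathcal{L}_{SUP}(u,i,y;{\bm \theta}(u,i)) \le \mathcal{L}_{SUP}(u,i,y;{\bm \theta}) - \delta \left\langle \nabla_{\bm \theta}\mathcal{L}_{SUP}, \nabla_{\bm \theta}\mathcal{L}_{SSL} \right\rangle + \frac{\beta\delta^2}{2}\|\nabla_{\bm \theta}\mathcal{L}_{SSL}\|^2$. From there the paper (following Sun et al.) does \emph{not} plug in the fixed step size directly: it first evaluates the bound at the gradient-dependent optimal step $\delta^* = \left\langle \nabla_{\bm \theta}\mathcal{L}_{SUP}, \nabla_{\bm \theta}\mathcal{L}_{SSL}\right\rangle / \bigl(\beta\|\nabla_{\bm \theta}\mathcal{L}_{SSL}\|^2\bigr)$, obtaining a guaranteed decrease of at least $\frac{\epsilon^2}{2\beta B^2}$ at ${\bm \theta}-\delta^*{\bf t}$, then verifies $0<\delta\le\delta^*$ and uses \emph{convexity} of $\mathcal{L}_{SUP}$ to interpolate along the segment between ${\bm \theta}$ and ${\bm \theta}-\delta^*{\bf t}$, yielding a decrease of $\frac{\delta}{\delta^*}\cdot\frac{\epsilon^2}{2\beta B^2}$ at the prescribed step. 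You instead substitute $\delta = \frac{\epsilon}{\beta B^2}$ straight into the descent lemma and bound the two terms separately, getting a strict decrease of $\frac{\epsilon^2}{2\beta B^2}$ with no appeal to convexity at all. Your observation that convexity is dispensable is accurate (and holds for the paper's conclusion too, had it argued as you do); your bound is in fact slightly stronger, since $\delta/\delta^* \le 1$ makes the paper's guaranteed decrease no larger than yours. What the paper's detour buys is a statement implicitly valid for every learning rate in $(0,\delta^*]$, exhibiting $\delta^*$ as the optimal step and showing the improvement persists for any smaller step, whereas your computation is tied to the specific $\delta$ (though it extends to any $\delta < 2\epsilon/(\beta B^2)$). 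Your remark that the bound on $\|\nabla_{\bm \theta}\mathcal{L}_{SUP}\|$ is never used is also correct — the paper's own proof does not use it either.
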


\textit{Remark.} By applying \textit{Theorem}\ref{the_1}, we can assert that the objective loss function of the recommendation model possesses the properties of differentiability, convexity, and $\beta$-smoothness. Furthermore, in combination with \textit{Theorem}\ref{the_2}, we can conclude that test-time training could enhance the performance of the recommender system.

\begin{proof}
Following~\cite{sun2019test}, we prove Theorem 2 in this section. Specifically, for any $\delta $, by $\beta$-smoothness, we could obtain:
\begin{equation}
\begin{aligned}
\mathcal{L}_{SUP}(u,i,y;{\bm \theta(u,i)}) &= \mathcal{L}_{SUP}(u,i,y;{\bm \theta(x)} - \delta \nabla_{\bm \theta}{\mathcal{L}_{SSL}(u,i;{\bm \theta})}) \nonumber \\
    & \leq \mathcal{L}_{SUP}(u,i,y;{\bm \theta(u,i)}\\
    &- \delta \left \langle \nabla_{{\bm \theta}}\mathcal{L}_{SUP}(u,i, y;{\bm  \theta}),\nabla_{{\bm \theta}}\mathcal{L}_{SSL}(u,i;{\bm  \theta})  \right \rangle \nonumber\\ 
    &+ \frac{\delta ^2\beta}{2} \|\nabla_{\bm \theta}{\mathcal{L}_{SSL}(u,i;{\bm \theta})}\|^2.
    \nonumber  
\end{aligned}
\end{equation}

Then, we denote $\delta ^*$ as follows:
\begin{equation}
    \delta ^* = \frac{\left \langle \nabla_{{\bm \theta}}\mathcal{L}_{SUP}(u,i, y;{\bm  \theta}),\nabla_{{\bm \theta}}\mathcal{L}_{SSL}(u,i;{\bm  \theta})  \right \rangle }{\beta  \|\nabla_{\bm \theta}{\mathcal{L}_{SSL}(u,i;{\bm \theta})}\|^2},
    \nonumber
\end{equation}

Thus, we could calculate as:

\begin{flalign}
&\mathcal{L}_{SUP}(u,i,y;{\bm \theta} - \delta ^{*}\nabla_{\bm \theta}{\mathcal{L}_{SSL}(u,i;{\bm \theta})}) \nonumber \\ 
&\leq \mathcal{L}_{SUP}(u,i,y;{\bm \theta})  - \frac{{\left \langle \nabla_{{\bm \theta}}\mathcal{L}_{SUP}(u,i, y;{\bm  \theta}),\nabla_{{\bm \theta}}\mathcal{L}_{SSL}(u,i;{\bm  \theta})  \right \rangle }^2}{2\beta \|\nabla_{\bm \theta}{\mathcal{L}_{SSL}(u,i;{\bm \theta})}\|^2)}.
\nonumber
\end{flalign}

Based on the given assumptions regarding the $\ell_2$ norm of the first-order gradients for the main task and the self-supervised learning (SSL) task, as well as the assumption on their inner product, we can derive the following:

\begin{equation}
\begin{aligned}
 \mathcal{L}_{SUP}(u,i,y;{\bm \theta} ) &-\mathcal{L}_{SUP}(u,i,y;{\bm \theta} - \delta ^{*}\nabla_{\bm \theta}{\mathcal{L}_{SSL}(u,i;{\bm \theta})})\\ 
 &\geq \frac{\epsilon^2}{2\beta B^2}.
    \nonumber   
\end{aligned}
\end{equation}

Here, we denote $\delta  = \frac{\epsilon}{\beta B^2} $. Besides, based on the assumptions. we could obtain $0<\delta \leq\delta ^*$.

Then, denote ${\bf t}= \nabla_{\bm \theta}{\mathcal{L}_{SSL}(u,i;{\bm \theta})},$ by convexity of $\mathcal{L}_{SUP}$,
\begin{flalign}
&\mathcal{L}_{SUP}(u,i,y;{\bm \theta}(u,i)) = \mathcal{L}_{SUP}(u,i,y;{\bm \theta} - \delta {\bf t}) \nonumber \\
&= \mathcal{L}_{SUP}(u,i,y;(1-\frac{\delta }{\delta ^*}){\bm \theta}+ \frac{\delta }{\delta ^*}({\bm \theta}- \delta ^*{\bf t})) \nonumber \\
&\leq (1-\frac{\delta }{\delta ^*})\mathcal{L}_{SUP}(u,i,y;{\bm \theta}) + \frac{\delta }{\delta ^*}\mathcal{L}_{SUP}(u,i,y;{\bm \theta} - \delta ^*{\bf t}) \nonumber \\
&\leq (1-\frac{\delta }{\delta ^*})\mathcal{L}_{SUP}(u,i,y;{\bm \theta}) + \frac{\delta }{\delta ^*}(\mathcal{L}_{SUP}(u,i,y;{\bm \theta}) - \frac{\epsilon^2}{2\beta B^2})\nonumber \\
&= \mathcal{L}_{SUP}(u,i,y;{\bm \theta}) - \frac{\delta }{\delta ^*}\frac{\epsilon^2}{2\beta B^2}.
\nonumber
\end{flalign}

Moreover,  $\frac{\delta }{\delta ^*} > 0$, we could calculate: 
\begin{align}
    \mathcal{L}_{SUP}(u,i,y;{\bm \theta})-\mathcal{L}_{SUP}(u,i,y;{\bm \theta}(u,i)) > 0.
    \nonumber
\end{align}

Therefore, we can conclude that test-time training could enhance the performance of the recommender system.

\end{proof}

\begin{table}[]
\centering
\caption{Statistics of the datasets used in our method.}
\scalebox{0.95}{
\begin{tabular}{cccccc}
\hline
\textbf{Dataset}                            & \textbf{\#User}                & \textbf{\#Item}               & \textbf{\#IID int.}             & \textbf{\#OOD int.}             & \textbf{Density}              \\ \hline
\textbf{Synthetic data}                     & 1,000                         & 1,000                        & 145,270                        & 112,371                        & 0.257641                      \\
\textbf{Meituan}                            & 2,145                         & 7,189                        & 11,400                         & 6,944                          & 0.001189                      \\
\textbf{Yelp}                               & 7,975                         & 74,722                       & 305,128                        & 99,525                         & 0.000679                      \\
{\textbf{Amazon-Book}} & {11,000} & {9,332} & {75,138}  & {45,326}  & {0.001200} \\
{\textbf{Steam}}       & {23,310} & {5,237} & {135,961} & {180,229} & {0.002600} \\ \hline
\end{tabular}}
\label{dataset_info}
\end{table}

\section{Experiments}

We implement a series of experiments in this section to validate the effectiveness of the proposed DT3OR by addressing the following research questions:

\begin{itemize}
\item \textbf{RQ1}: How does the performance of DT3OR in OOD recommendation compared with other SOTA recommender methods?
\item \textbf{RQ2}: What is the impact of the hyper-parameters on the performance of DT3OR?
\item \textbf{RQ3}: How do the distinct components of DT3OR contribute to its performance?
\item \textbf{RQ4}: What is the underlying interest clustering structure revealed by DT3OR?
\item {\textbf{RQ5}: How does the efficiency of DT3OR compare to other recommendation algorithms?}
\item {\textbf{RQ6}: How stable is DT3OR under distribution shifts of different scales?}

\end{itemize}

\begin{table*}[]
\centering
\caption{{OOD recommendation performance comparison on three datasets with 11 baselines. Especially, the best results are emphasized in bold. Besides, $\dagger$ denotes results are statistically significant where the p-value is less than 0.05.}}
\scalebox{0.9}{
\begin{tabular}{cc|cccc|cccc|cccc}
\hline
\multicolumn{2}{c|}{\textbf{Dataset}}                                                & \multicolumn{4}{c|}{\textbf{Synthetic Data}}                                                                                              & \multicolumn{4}{c|}{\textbf{Meituan}}                                                                                                   & \multicolumn{4}{c}{\textbf{Yelp}}                                                                                       \\ \hline
\multicolumn{2}{c|}{\textbf{Metric}}                                                & \textbf{R@10}                     & \textbf{R@20}                & \textbf{N@10}                     & \textbf{N@20}                      & \textbf{R@50}                     & \textbf{R@100}              & \textbf{N@50}                     & \textbf{N@100}                    & \textbf{R@50}                & \textbf{R@100}               & \textbf{N@50}               & \textbf{N@100}              \\ \hline
\textbf{FM}                                   & ICDM 2010                            & 5.72                              & 10.74                        & 6.04                              & 7.92                               & 1.21                              & 2.05                        & 0.43                              & 0.57                              & 9.64                         & 13.89                        & 3.13                        & 3.85                        \\
\textbf{NFM}                                  & SIGIR 2017                           & 4.05                              & 7.61                         & 4.38                              & 5.60                               & 2.33                              & 3.54                        & 0.66                              & 0.85                              & 8.29                         & 12.76                        & 2.41                        & 3.16                        \\
\textbf{MultiVAE}                             & WWW 2018                             & 2.08                              & 4.08                         & 1.72                              & 2.57                               & 2.38                              & 3.68                        & 0.69                              & 0.91                              & 3.65                         & 5.82                         & 1.18                        & 1.54                        \\
\textbf{MacridVAE}                            & NeurIPS 2019                         & 2.31                              & 3.92                         & 1.92                              & 2.62                               & 2.19                              & 3.64                        & 0.67                              & 0.90                              & 4.08                         & 6.34                         & 1.35                        & 1.74                        \\
\textbf{MacridVAE+FM}                         & NeurIPS 2019                         & 4.63                              & 8.36                         & 5.13                              & 6.43                               & 2.33                              & 3.64                        & 0.66                              & 0.87                              & 4.07                         & 6.26                         & 1.40                        & 1.78                        \\
\textbf{CausPref}                             & WWW 2022                             & 5.41                              & 4.77                         & 6.74                              & 9.41                               & 3.44                              & 4.35                        & 0.98                              & 1.21                              & 13.46                        & 18.34                        & 4.14                        & 5.01                        \\
\textbf{COR}                                  & WWW 2022                             & 7.67                              & {14.43}                  & 8.04                              & 10.56                              & 3.68                              & {5.78}                  & 1.01                              & 1.35                              & {14.16}                  & {19.86}                  & {5.00}                  & {5.95}                  \\
{\textbf{InvCF}}         & {WWW 2023}      & {7.64}       & {13.09} & {7.42}       & {8.45}        & {2.98}       & {4.05} & {0.74}       & {0.98}       & {10.45} & {11.86} & {3.04} & {4.01} \\
{\textbf{PopGo}}         & {TOIS 2024}     & {7.08}       & {13.24} & {7.10}       & {9.52}        & {2.95}       & {4.28} & {0.85}       & {1.20}       & {12.68} & {14.62} & {3.57} & {4.16} \\
{\textbf{CausalDiffRec}} & {Preprint 2024} & {7.72}       & {14.00} & {{8.09}} & {{10.61}} & {3.14}       & {5.29} & {0.98}       & {1.16}       & {11.98} & {13.45} & {3.76} & {4.38} \\
{\textbf{DR-GNN}}        & {WWW 2024}      & {{7.95}} & {14.21} & {8.00}       & {10.34}       & {{3.72}} & {5.75} & {{1.04}} & {{1.35}} & {14.11} & {19.76} & {4.87} & {5.03} \\ \hline
\multicolumn{2}{c|}{Ours}                                                            & \textbf{8.20}$\dagger$                     & \textbf{14.84}$\dagger$               & \textbf{8.42}$\dagger$                     & \textbf{10.86}$\dagger$                     & \textbf{4.15}$\dagger$                     & \textbf{5.83}$\dagger$               & \textbf{1.13}$\dagger$                     & \textbf{1.40}$\dagger$                     & \textbf{14.41}$\dagger$               & \textbf{19.90}$\dagger$               & \textbf{5.10}$\dagger$               & \textbf{6.02}$\dagger$               \\
\multicolumn{2}{c|}{\%Improvement}                                                   & 3.14\%                            & 2.84\%                       & 4.08\%                            & 2.36\%                             & 11.56\%                           & 0.87\%                      & 8.65\%                           & 3.70\%                            & 1.77\%                       & 0.20\%                       & 2.00\%                      & 1.18\%                      \\ \hline
\end{tabular}}
\label{com_res}
\end{table*}

\begin{figure*}
\centering
\begin{minipage}{0.3\linewidth}
\centerline{\includegraphics[width=1\textwidth]{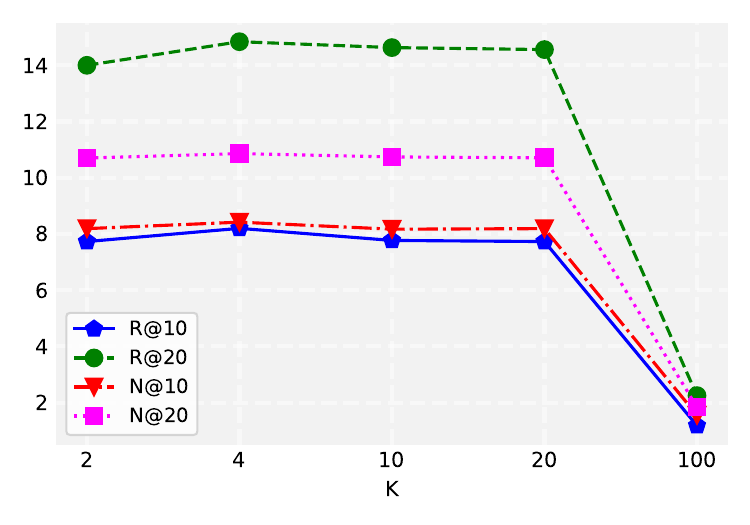}}
\vspace{3pt}
\centerline{{Synthetic}}
\end{minipage}\hspace{4mm}
\begin{minipage}{0.3\linewidth}
\centerline{\includegraphics[width=1\textwidth]{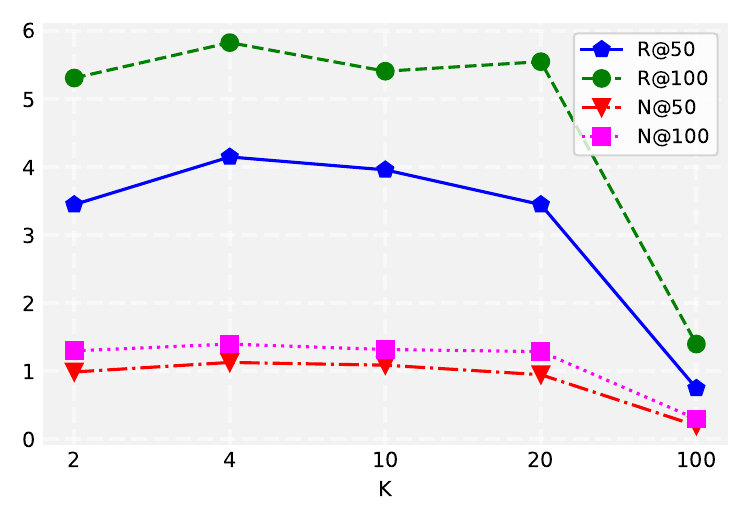}}
\vspace{3pt}
\centerline{{Meituan}}
\end{minipage}\hspace{4mm}
\begin{minipage}{0.3\linewidth}
\centerline{\includegraphics[width=1\textwidth]{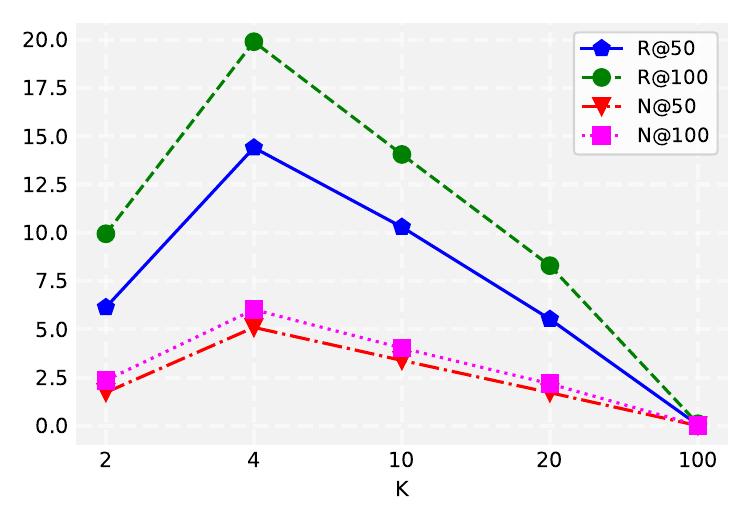}}
\vspace{3pt}
\centerline{{Yelp}}
\end{minipage}
\caption{Sensitivity analysis of the hyper-parameter cluster number $K$ on Synthetic, Meituan and Yelp datasets.}
\label{sen_K}
\end{figure*}

\subsection{Datasets \& Metric}

\subsubsection{Benchmark Datasets} {We conducted extensive experiments to evaluate the effectiveness of DT3OR using five benchmark datasets: Meituan\footnote{\url{https://www.biendata.xyz/competition/smp2021_2/}}, Yelp\footnote{\url{https://www.yelp.com/dataset.}}, Steam\footnote{\url{https://github.com/kang205/SASRec}}, Amazon-Book\footnote{\url{https://jmcauley.ucsd.edu/data/amazon/}}, and a synthetic dataset. To comprehensively assess the model, we constructed three distribution shift scenarios: synthetic shift, location shift, and temporal shift. The specific description is detailed as follows.} 

\begin{itemize}
    \item \textbf{Synthetic Shift.} The synthetic dataset contains 1,000 users and 1,000 items. The features of the users/items are sampled from the standard Gaussian $\mathcal{N}(0,1)$. The OOD features are collected by re-sampling user/item features from $\mathcal{N}(1,1)$. The detailed description of the dataset construction is provided in Appendix. \ref{Constrcution}.
    \item \textbf{Location Shift.} Yelp is a commonly utilized dataset for restaurant recommendations. Within this dataset, the user's location is taken into account as the fluctuating feature. The user ratings are sorted by timestamps, and the dataset is divided into two parts based on changes in location. The user interactions with shifted locations are considered OOD interactions.
    \item \textbf{Temporal Shift.} Meituan includes a public food recommendation dataset that contains a large number of user and item features, e.g., user consumption levels and food prices. For the OOD scenario in Meituan, we define it as a shift in the average consumption levels between weekdays and weekends. Specifically, users exhibit higher or lower consumption levels during weekends compared to weekdays. The user interactions during weekdays are considered in-distribution interactions, while those during weekends are considered OOD interactions.
    \item {\textbf{Temporal Shift.} The Amazon Book dataset offers a comprehensive collection of user ratings and reviews for books. It encompasses user IDs, item IDs, ratings, and in-depth reviews, along with extensive metadata including book titles, authors, categories, descriptions, and cover images. With 11,000 users, 9,332 items, and 120,464 interactions (45,326 OOD Int and 75, 138 IID Int). Similar to Meituan dataset, users exhibit higher consumption level during weekends compared with weekdays. Consequently, we classify weekday consumption as in-distribution and weekend consumption as out-of-distribution based on the time of consumption.}
    \item {\textbf{Temporal Shift.} The Steam dataset encompasses textual feedback from users on the Steam platform, featuring contributions from 23,310 users and covering 5,237 items, resulting in a total of 316,190 interactions (180,229 OOD Int and 135,961 IID Int). It includes user reviews, ratings, and diverse metadata related to games, such as titles, genres, and release dates. During weekdays, leisure time is limited, whereas weekends are better suited for relaxation. Similar to the case of Amazon Books, we also separate game time on Steam into two categories: weekday gaming time is considered in-distribution, while weekend gaming time is treated as out-of-distribution.}
\end{itemize}

For all datasets, interactions that have ratings of 4 or higher are treated as positive instances. To maintain a level playing field, a sparse partitioning approach is employed, allocating 80\%/10\%/10\% of the data for training, validation, and testing. Comprehensive statistics of the datasets used in this paper are listed in Tab.~\ref{dataset_info}.

\subsubsection{Evaluation Metrics} 
The performance of the recommendation is evaluated through two commonly applied metrics, i.e., Recall and NDCG. These metrics are evaluated using the all-ranking protocol~\cite{DGCF}, which considers the top-K items selected from the complete set of items that users have not interacted with. For the synthetic dataset, we designate K values of 10 and 20. Given the significant quantity of items in the Meituan and Yelp datasets, we showcase the results for K values of 50 and 100.

\subsection{Experiment Settings}
\subsubsection{Implementation Details}

The experiments are executed utilizing an NVIDIA 3090 GPU on the PyTorch platform. For the benchmark models, the performance values from the COR~\cite{cor} paper are directly reported for all datasets. In our proposed approach, we utilized COR as the primary network and trained our network using the Adam optimizer~\cite{adam}. The batch size was established at 500. Moreover, the training during test-time was carried out for 10 epochs. The trade-off hyper-parameter $\alpha$ was selected from the set $\{0.01, 0.1, 1.0, 10, 100\}$. The number of clusters $K$ was selected within the range of $\{2, 4, 10, 20, 100\}$, and the threshold $\tau$ for selecting high confidence samples was selected from $\{0.1, 0.3, 0.5, 0.7, 0.9\}$. The detailed hyper-parameter settings are provided in Tab.~\ref{hyper_para}.

\subsubsection{Comparison Recommendation Methods}
To substantiate the efficiency of our DT3OR, we contrast its performance with several baseline models, {including FM~\cite{FM}, NFM~\cite{NFM}, MultiVAE~\cite{MultiVAE}, MacridVAE~\cite{macidvae}, MacridVAE+FM, CausPref~\cite{causpref}, COR~\cite{cor}, InvCF~\cite{InvCF}, PopGo~\cite{PopGo}, CausalDiffRec~\cite{CausalDiffRec}, DR-GNN~\cite{DR-GNN}. }Specifically,

\begin{itemize}
    \item FM~\cite{FM}: With factorized parameters, FM models all interactions between variables.
    \item NFM~\cite{NFM}: NFM fuses the linearity of FM in handling second-order feature interactions with the non-linearity exhibited by networks.
    \item MultiVAE~\cite{MultiVAE}: MultiVAE used variational autoencoders to collaborative filtering for the implicit feedback.
    \item MacidVAE~\cite{macidvae}: It acquires disentangled representations by deducing high-level concepts linked to user intentions.
    \item CausPref~\cite{causpref}: CausPref used causal learning to obtain the invariant user preference and anti-preference negative sampling to handle the implicit feedback.
    \item COR~\cite{cor}: COR devised a new variational autoencoder to deduce the unobserved features from historical interactions in a causal way with out-of-distribution recommendation.
    \item {DR-GNN~\cite{DR-GNN} designed a distributionally robust optimization method to GNN-based recommendation algorithm.} 
    \item {PopGo~\cite{PopGo} proposed a debiasing strategy to reduce the interaction-wise popularity shortcut. It learned a shortcut model to yield a shortcut degree of a user-item pair based on representations.} 
    \item {CausalDiffRec~\cite{CausalDiffRec} addressed the out-of-distribution problem with causal diffusion method. By eliminating environmental confounding factors and learning invariant graph representations, the generalization ability of CausalDiffRec is improved for OOD data.}
    \item {InvCF~\cite{InvCF} designed an invariant collaborative filtering method to discover disentangled representations. It could reveal the latent preference and the popularity semantics.}
\end{itemize}


\begin{table*}[]
\centering
\caption{{OOD recommendation performance comparison on two datasets with 11 baselines. Especially, the best results are emphasized in bold. Besides, $\dagger$ denotes results are statistically significant where the p-value is less than 0.05.}}
\scalebox{1.1}{
\begin{tabular}{cc|cccc|cccc}
\hline
\multicolumn{2}{c|}{{\textbf{Dataset}}}                         & \multicolumn{4}{c|}{{\textbf{Amazon-Book}}}                                                                                          & \multicolumn{4}{c}{{\textbf{Steam}}}                                                                                                    \\ \hline
\multicolumn{2}{c|}{{\textbf{Method}}}                          & {\textbf{R@50}} & {\textbf{R@100}} & {\textbf{N@50}} & {\textbf{N@100}} & {\textbf{R@50}}  & {\textbf{R@100}} & {\textbf{N@50}} & {\textbf{N@100}} \\ \hline
{\textbf{FM}}            & {ICDM 2010}     & {2.19}          & {3.62}          & {0.47}          & {0.98}          & {9.52}           & {12.08}          & {2.47}          & {3.57}           \\
{\textbf{NFM}}           & {SIGIR 2017}    & {2.64}          & {3.89}          & {0.51}          & {1.03}          & {9.04}           & {11.85}          & {2.26}          & {2.68}           \\
{\textbf{MultiVAE}}      & {WWW 2018}      & {2.56}          & {3.57}          & {0.5}           & {1.11}          & {10.54}          & {12.62}          & {3.84}          & {4.76}           \\
{\textbf{MacridVAE}}     & {NeurIPS 2019}  & {2.98}          & {4.05}          & {0.67}          & {1.26}          & {11.47}          & {12.99}          & {4.04}          & {4.89}           \\
{\textbf{MacridVAE+FM}}  & {NeurIPS 2019}  & {3.04}          & {4.86}          & {1.01}          & {1.28}          & {12.75}          & {13.27}          & {5.04}          & {5.87}           \\
{\textbf{CausPref}}      & {WWW 2022}      & {3.84}          & {5.00}          & {1.12}          & {1.31}          & {12.84}          & {15.75}          & {5.17}          & {5.98}           \\
{\textbf{COR}}           & {WWW 2022}      & {4.12}          & {6.57}          & {1.24}          & {2.35}          & {13.44}          & {22.64}          & {5.72}          & {7.13}           \\
{\textbf{InvCF}}         & {WWW 2023}      & {4.32}          & {6.47}          & {1.28}          & {2.49}          & {{14.64}}    & {{23.16}}    & {{6.85}}    & {{8.52}}     \\
{\textbf{PopGo}}         & {TOIS 2024}     & {4.01}          & {6.25}          & {1.19}          & {2.23}          & {13.07}          & {20.46}          & {4.93}          & {6.10}           \\
{\textbf{CausalDiffRec}} & {Preprint 2024} & {{4.65}}    & {{6.98}}    & {{1.57}}    & {{2.63}}    & {14.27}          & {22.21}          & {5.97}          & {7.92}           \\
{\textbf{DR-GNN}}        & {WWW 2024}      & {4.23}          & {6.17}          & {1.35}          & {2.48}          & {13.15}          & {21.24}          & {5.06}          & {6.08}           \\ \hline
\multicolumn{2}{c|}{{\textbf{Ours}}}                            & {\textbf{5.07}$\dagger$} & {\textbf{7.56}$\dagger$} & {\textbf{2.26}$\dagger$} & {\textbf{2.85}$\dagger$} & {\textbf{15.72}$\dagger$} & {\textbf{24.17}$\dagger$} & {\textbf{7.65}$\dagger$} & {\textbf{9.73}$\dagger$}  \\
\multicolumn{2}{c|}{{\textbf{\%Improvement}}}                   & {9.03\%}        & {8.31\%}         & {4.08\%}        & {8.37\%}        & {7.38\%}         & {4.36\%}         & {11.68\%}       & {14.20\%}        \\ \hline
\end{tabular}}
\end{table*}

\subsection{(\textbf{RQ1})~Performance Comparisons}
We evaluate the experiments against 11 baseline models based on Recall and NDCG metrics. {These models fall into four classes, which include FM-based methods (FM~\cite{FM} and NFM~\cite{NFM}), VAE-based methods (MultiVAE~\cite{MultiVAE}, MacridVAE~\cite{macidvae}, and MacridVAE+FM), graph-based methods (DR-GNN~\cite{DR-GNN}), causal learning methods (InvCF~\cite{InvCF}, CausPref~\cite{causpref}, COR~\cite{cor}, CausalDiffRec~\cite{CausalDiffRec}, and PopGo~\cite{PopGo}).}

The comparison results are demonstrated in Tab.~\ref{com_res}. Observations drawn from the results are as follows:
\begin{itemize}
    \item Our proposed DT3OR could achieve SOTA recommendation performance compared to classical FM-based recommendation algorithms. We infer that the reason behind this is that these methods do not design specific strategy to handle the user/item feature shift; 
    \item FM-based recommendation algorithms have a significant advantage on datasets with notable feature shifts, resulting in better performance compared to VAE-based algorithms on synthetic and Yelp datasets;
    \item {Compared with graph-based and causal-based recommendation algorithms, our DT3OR demonstrates superior recommendation performance in out-of-distribution (OOD) scenarios. We attribute this improvement to the incorporation of test-time training with self-supervised learning (SSL) tasks, which enables our method to better adapt to shifted data distributions, leading to better performance.}
\end{itemize}

{In summary, our proposed DT3OR outperforms several recommendation methods across different scale datasets with different OOD scenario.} Taking the results on meituan dataset for example, DT3OR exceeds the second-best method by margins of 12.77\%, 0.87\%, 11.88\%, and 3.70\% in R@50, R@100, N@50, and N@100, respectively. The test-time-training with dual SSL tasks could assist the model adapt to the shift user/item features in test-time phase under OOD environment, thus alleviating the influence of data shift.

\subsection{(\textbf{RQ2})~Parameter Sensitivity}

We implement series of parameter analyses on Meituan, Yelp and the Synthetic datasets to verify the sensitivity of the hyper-parameters in this subsection.

\begin{figure*}
\centering
\begin{minipage}{0.3\linewidth}
\centerline{\includegraphics[width=1\textwidth]{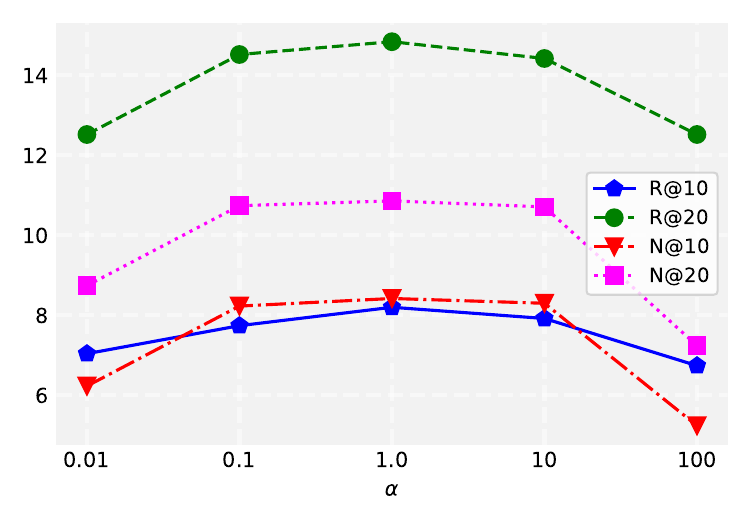}}
\vspace{3pt}
\centerline{{Synthetic}}
\end{minipage}\hspace{4mm}
\begin{minipage}{0.3\linewidth}
\centerline{\includegraphics[width=1\textwidth]{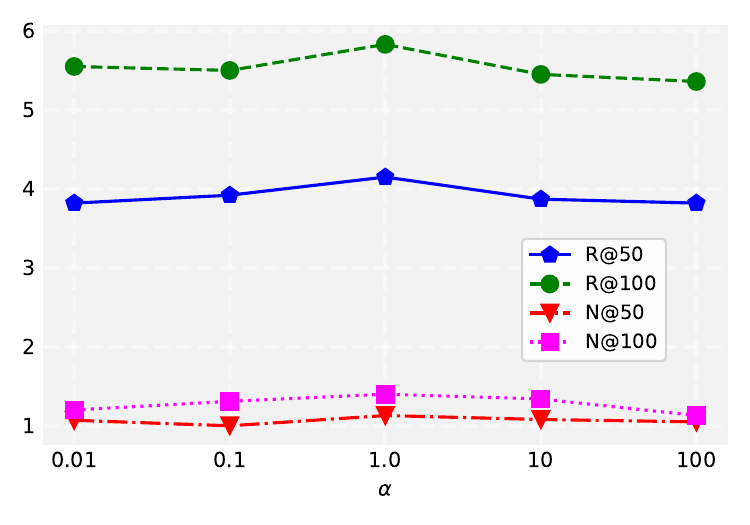}}
\vspace{3pt}
\centerline{{Meituan}}
\end{minipage}\hspace{4mm}
\begin{minipage}{0.3\linewidth}
\centerline{\includegraphics[width=1\textwidth]{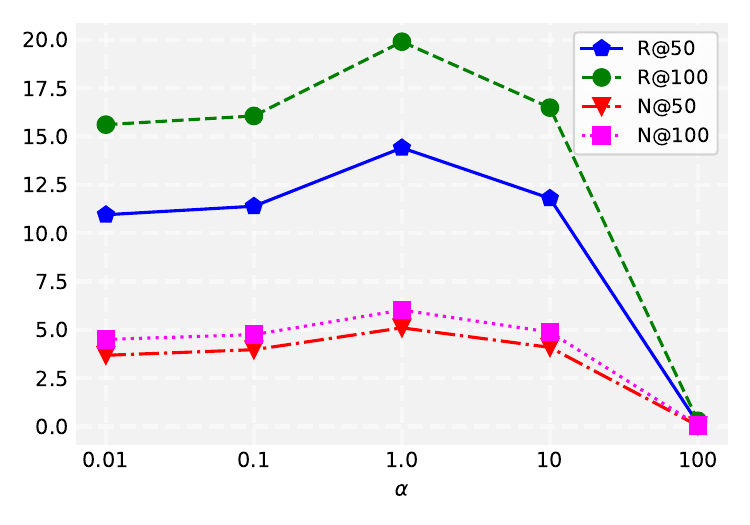}}
\vspace{3pt}
\centerline{{Yelp}}
\end{minipage}
\caption{Sensitivity analysis for the trade-off hyper-parameter $\alpha$.}
\label{sen_alpha}
\end{figure*}
\begin{figure*}
\centering
\begin{minipage}{0.3\linewidth}
\centerline{\includegraphics[width=1\textwidth]{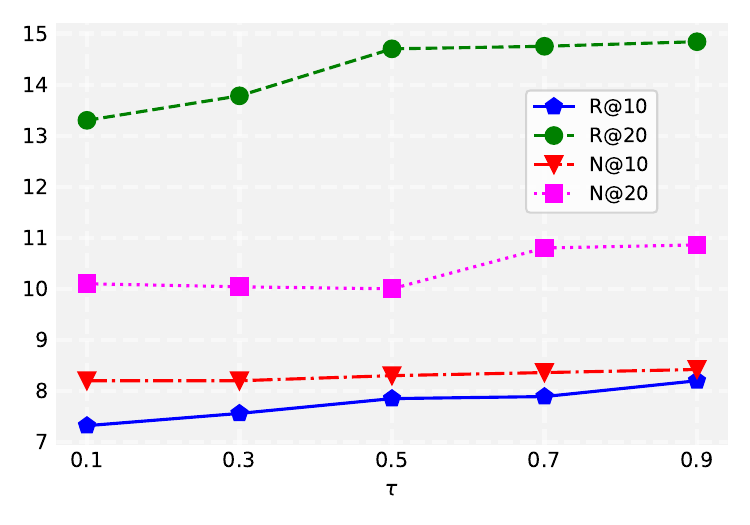}}
\vspace{3pt}
\centerline{{Synthetic}}
\end{minipage}\hspace{4mm}
\begin{minipage}{0.3\linewidth}
\centerline{\includegraphics[width=1\textwidth]{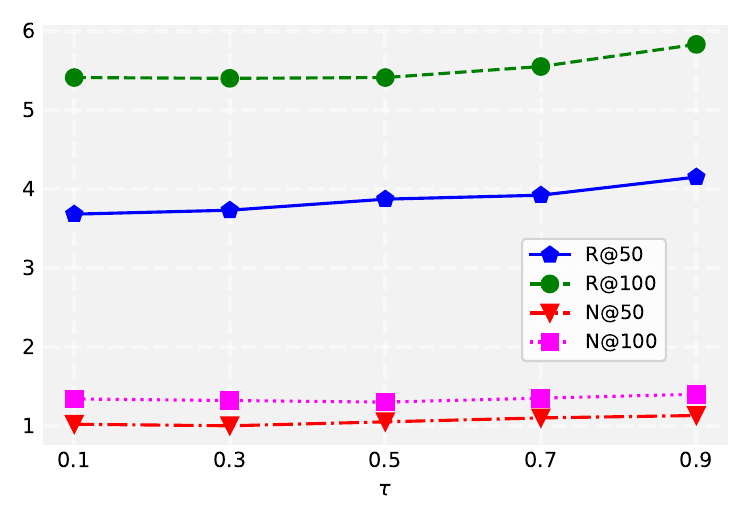}}
\vspace{3pt}
\centerline{{Meituan}}
\end{minipage}\hspace{4mm}
\begin{minipage}{0.3\linewidth}
\centerline{\includegraphics[width=1\textwidth]{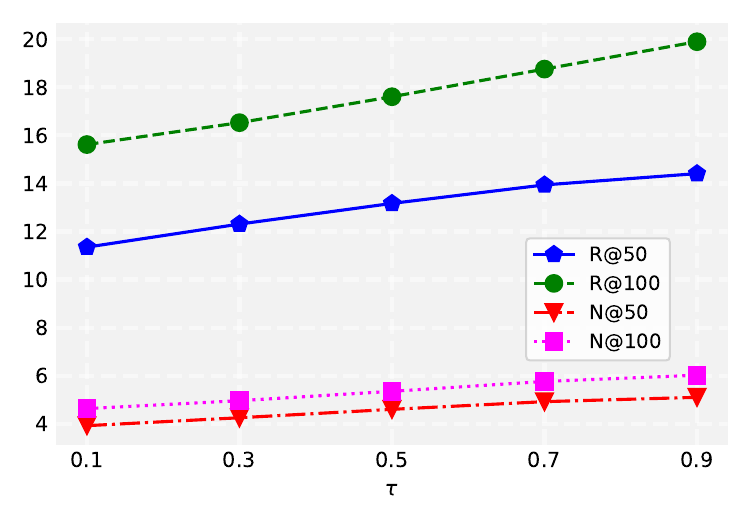}}
\vspace{3pt}
\centerline{{Yelp}}
\end{minipage}
\caption{Sensitivity analysis for the threshold hyper-parameter $\tau$.}
\label{sen_tau}
\end{figure*}
\begin{figure*}
\centering
\begin{minipage}{0.23\linewidth}
\centerline{\includegraphics[width=1\textwidth]{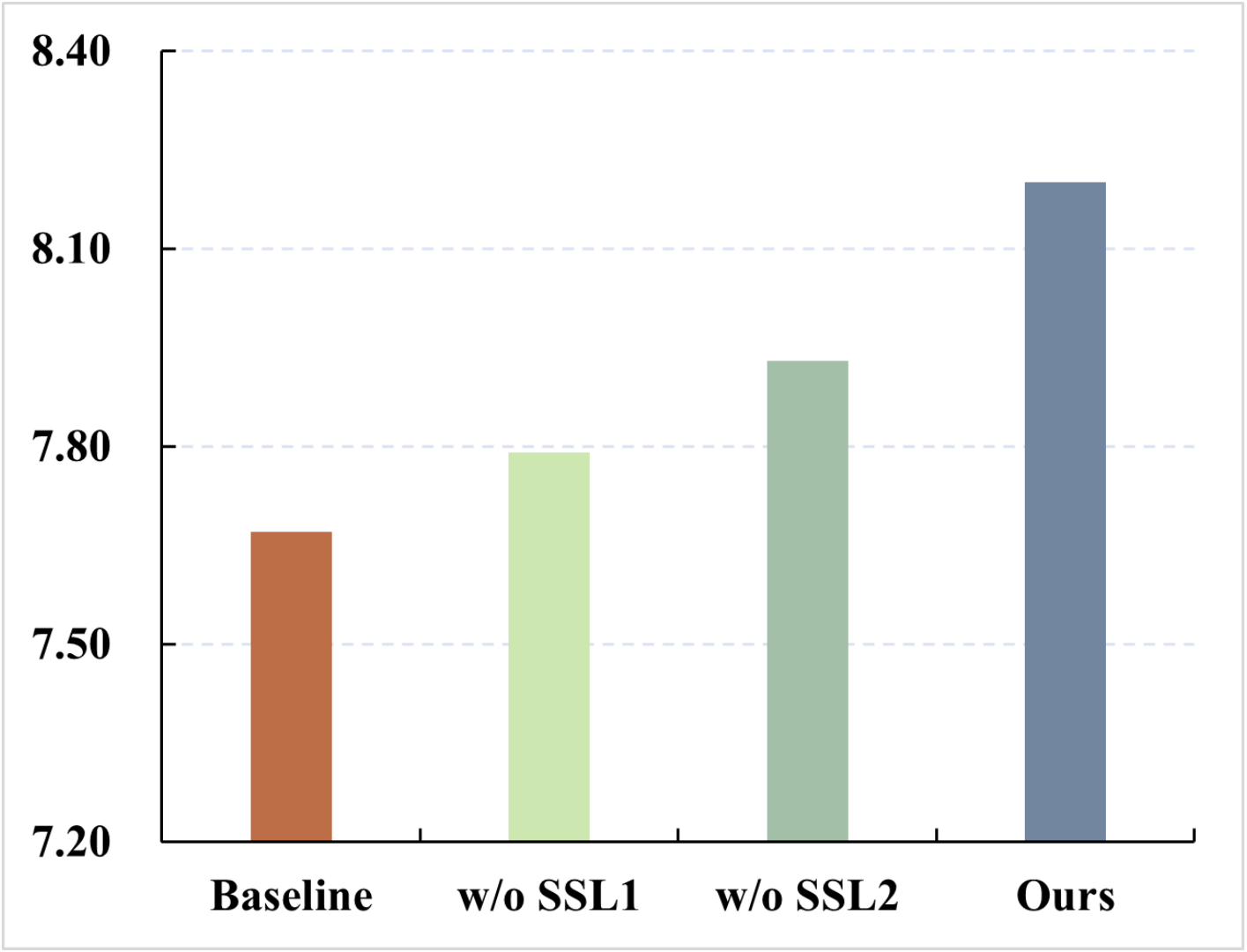}}
\centerline{{Synthetic with R@10}}
\vspace{10pt}
\centerline{\includegraphics[width=1\textwidth]{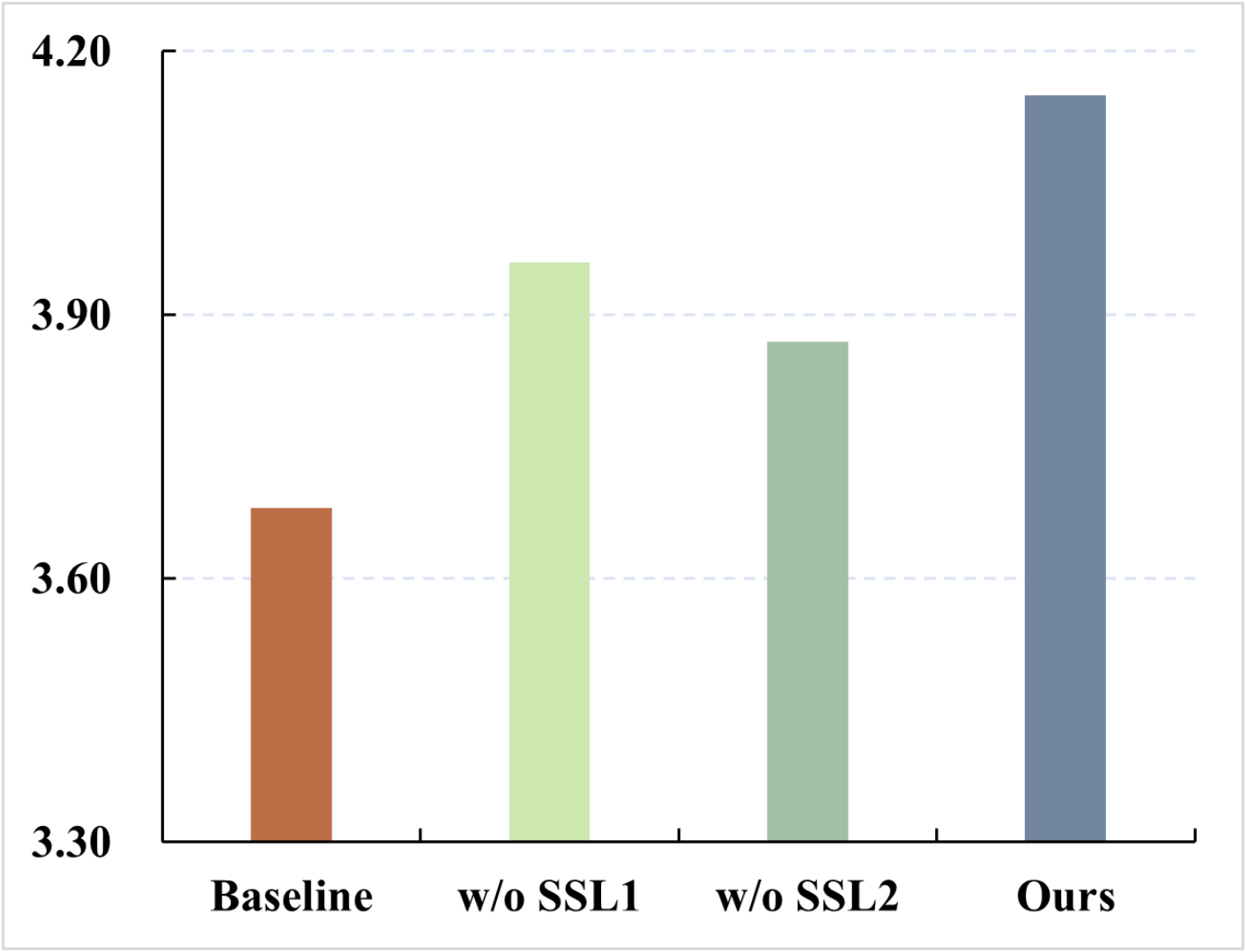}}
\centerline{{Meituan with R@50}}
\vspace{10pt}
\centerline{\includegraphics[width=1\textwidth]{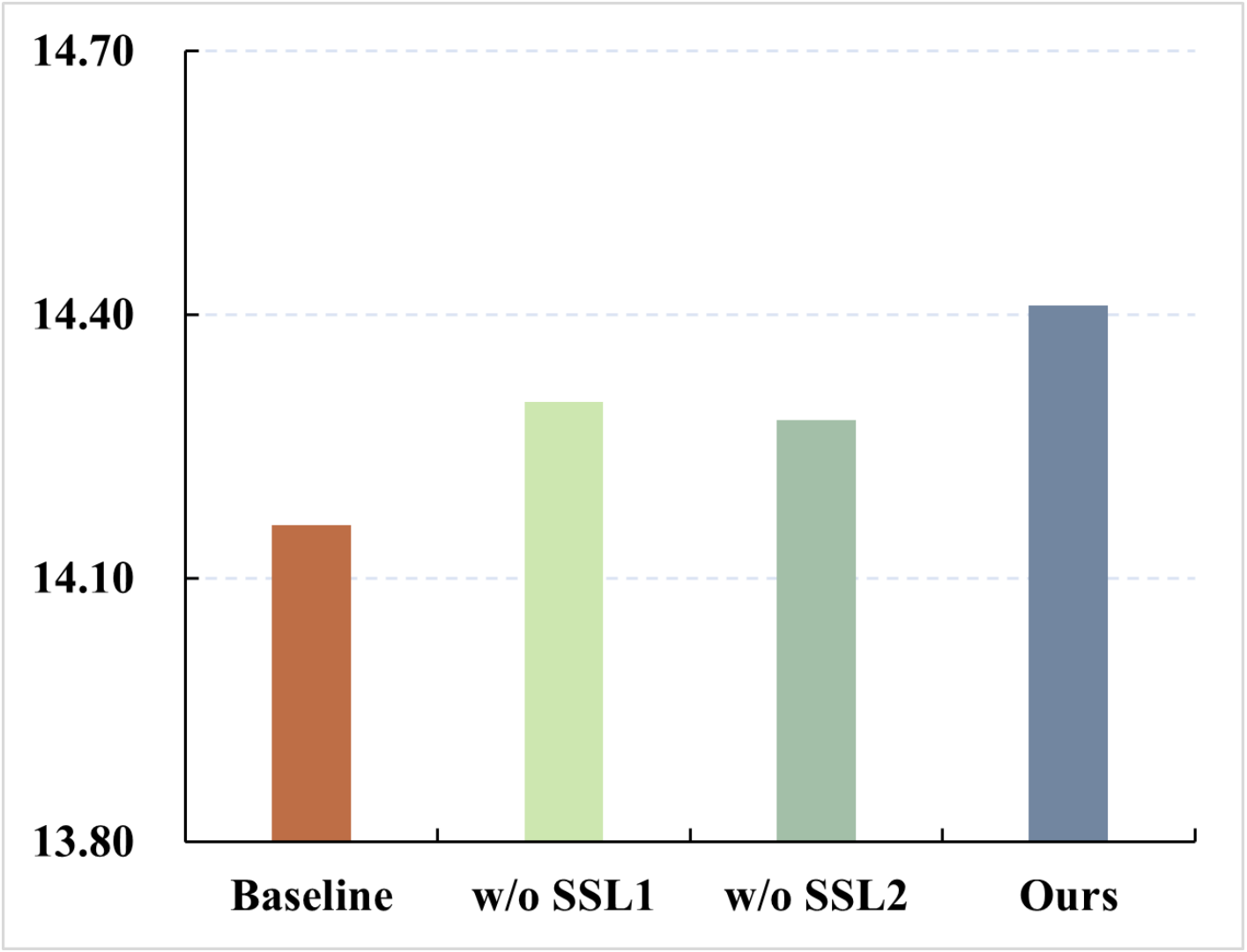}}
\centerline{{Yelp with R@50}}
\vspace{10pt}
\end{minipage}
\begin{minipage}{0.23\linewidth}
\centerline{\includegraphics[width=1\textwidth]{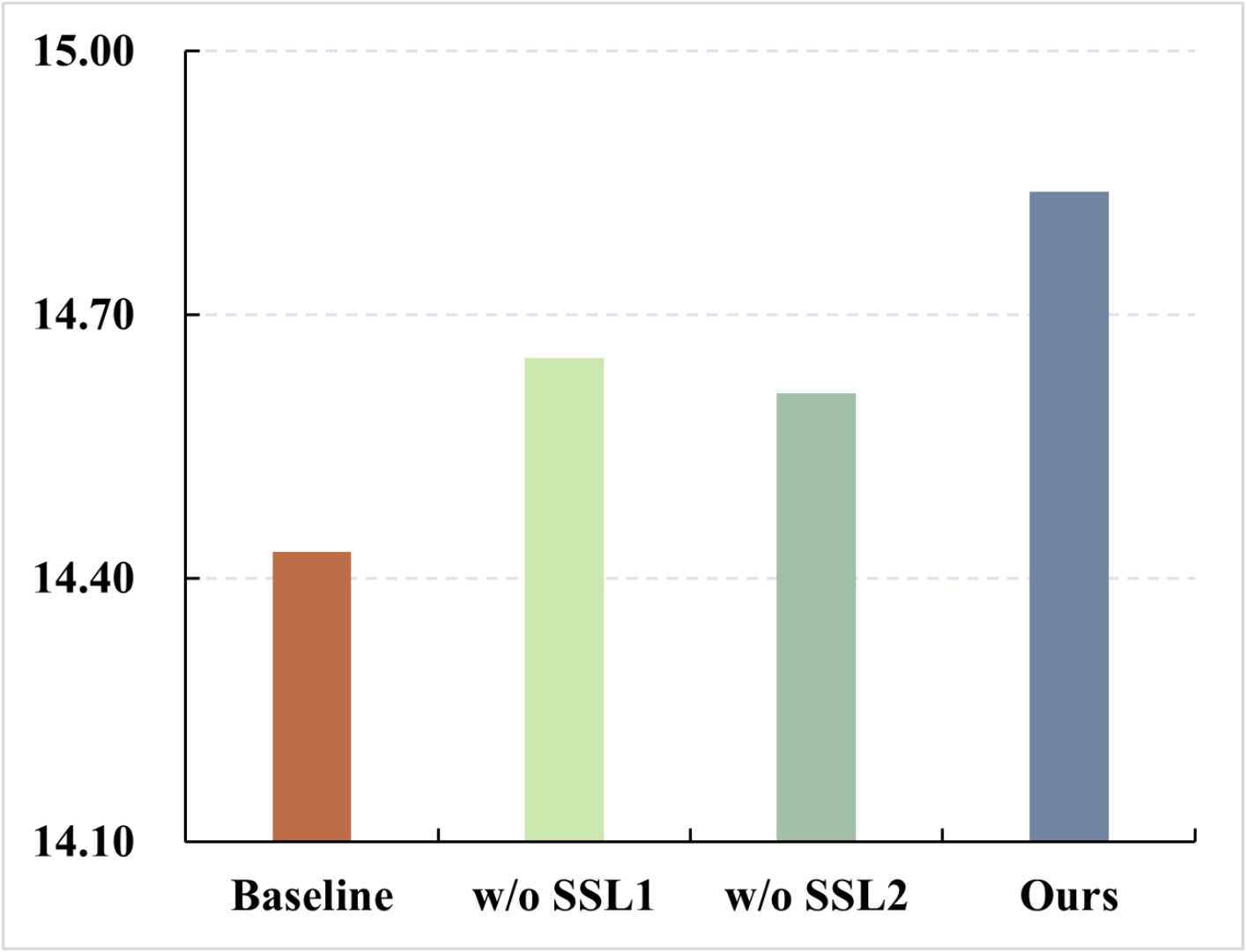}}
\centerline{{Synthetic with R@20}}
\vspace{10pt}
\centerline{\includegraphics[width=1\textwidth]{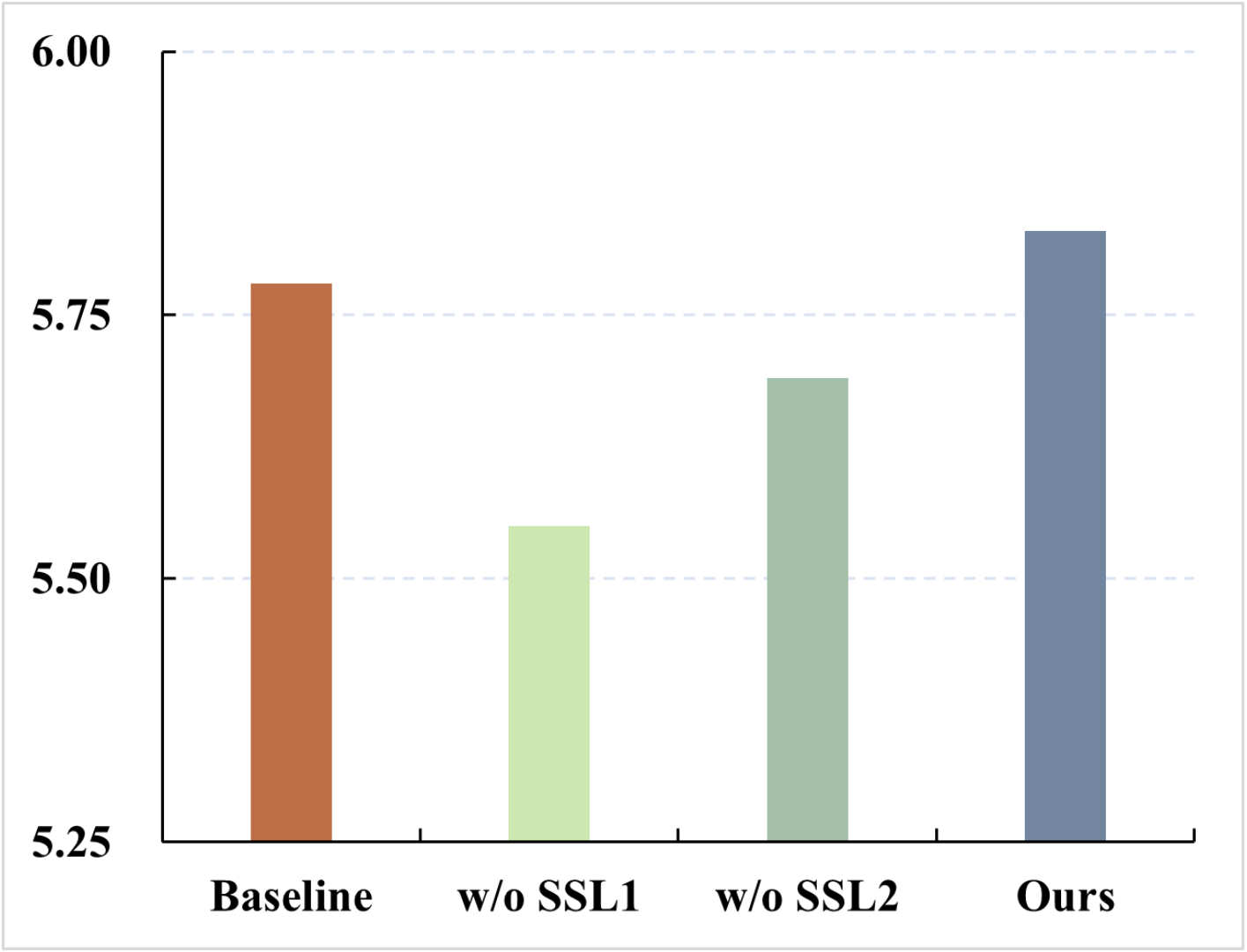}}
\centerline{{Meituan with R@100}}
\vspace{10pt}
\centerline{\includegraphics[width=1\textwidth]{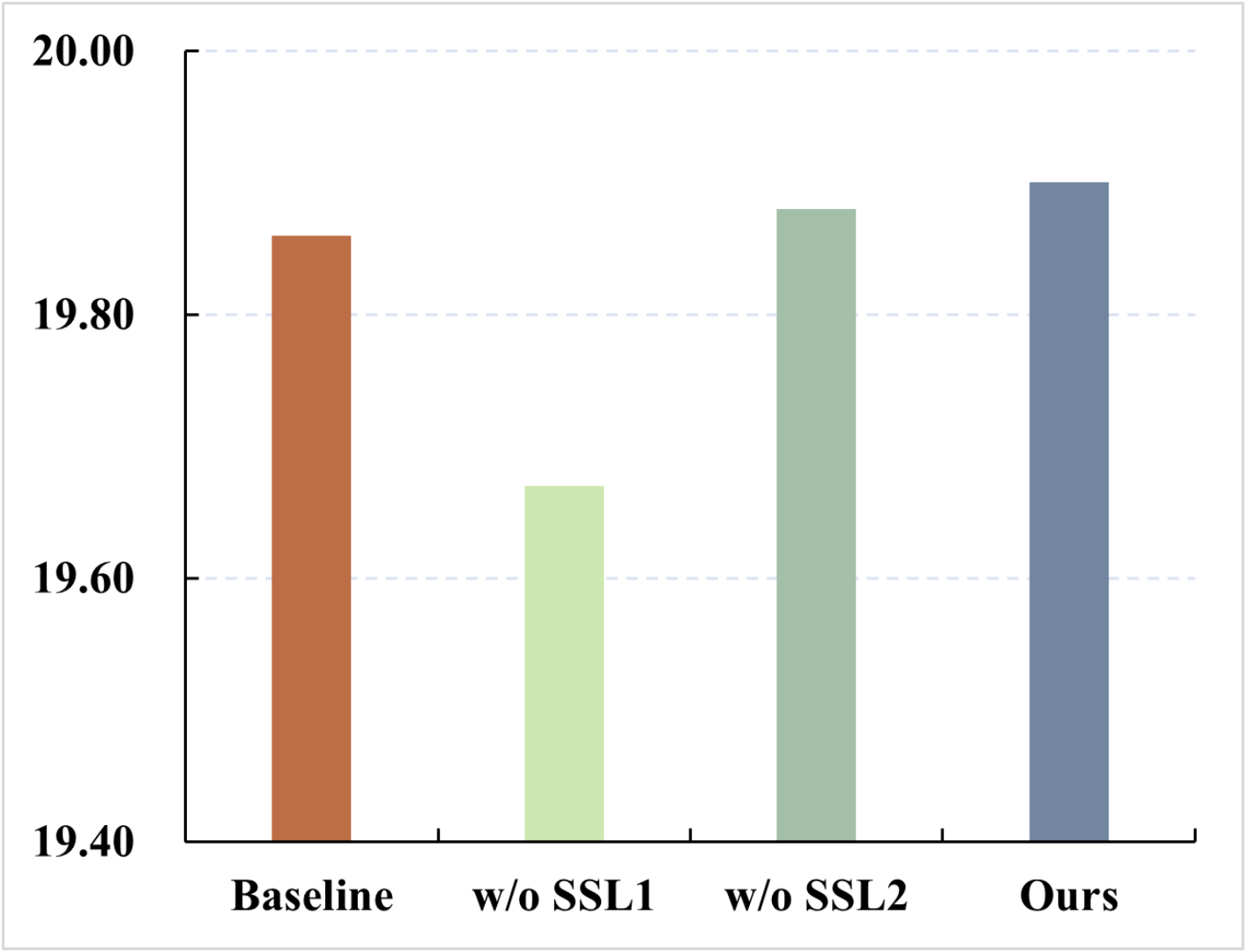}}
\centerline{{Yelp with R@100}}
\vspace{10pt}
\end{minipage}
\begin{minipage}{0.23\linewidth}
\centerline{\includegraphics[width=1\textwidth]{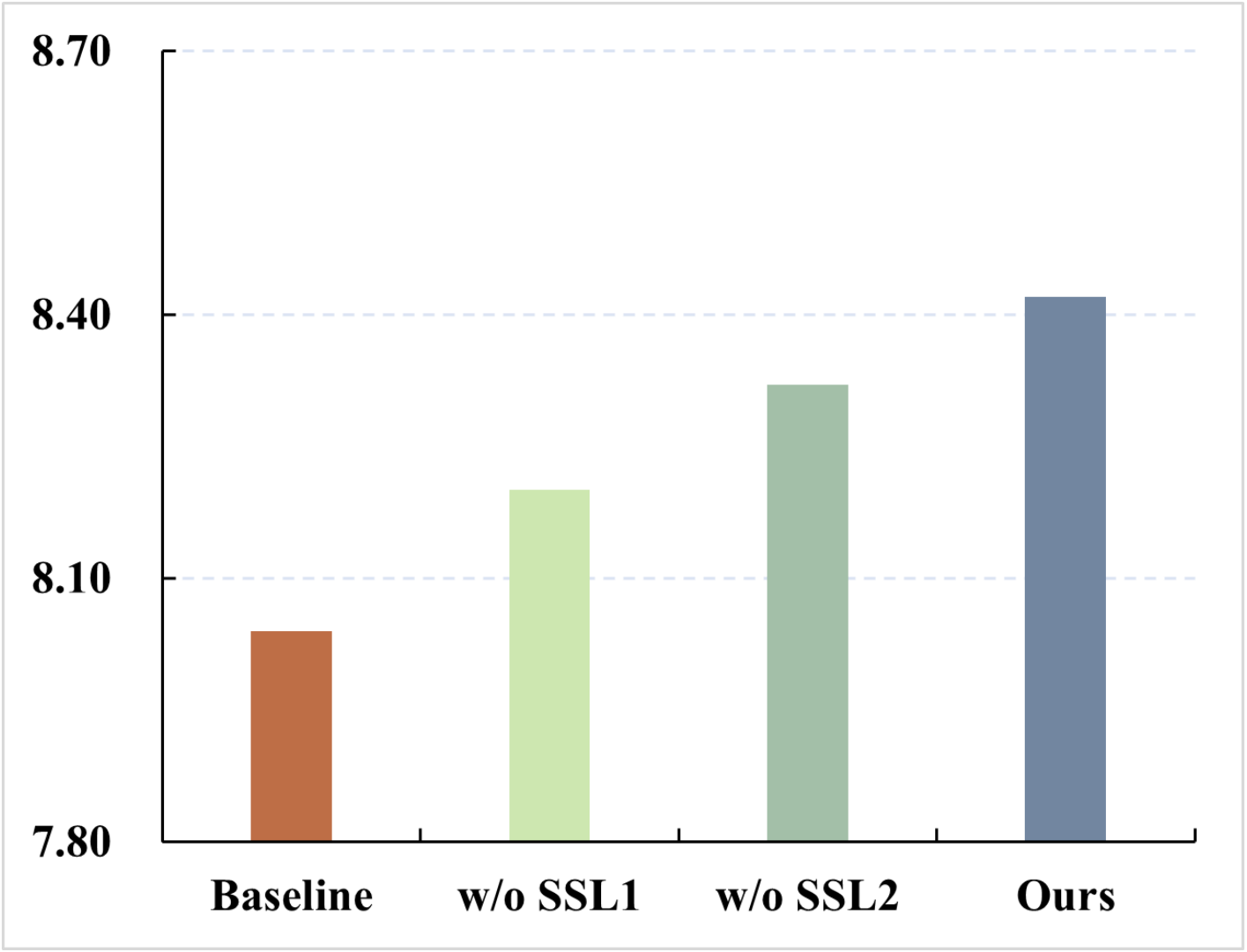}}
\centerline{{Synthetic with N@10}}
\vspace{10pt}
\centerline{\includegraphics[width=1\textwidth]{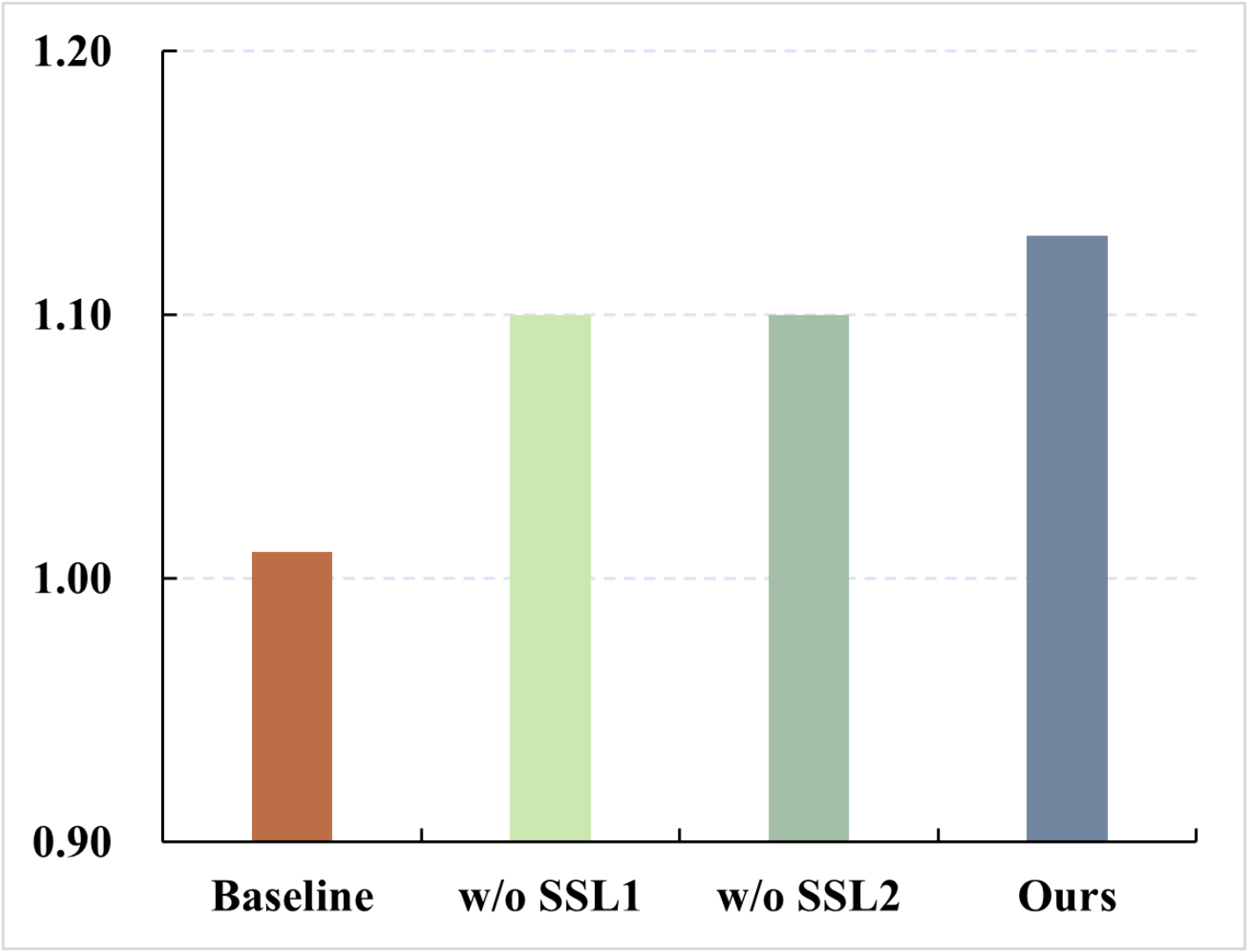}}
\centerline{{Meituan with N@50}}
\vspace{10pt}
\centerline{\includegraphics[width=1\textwidth]{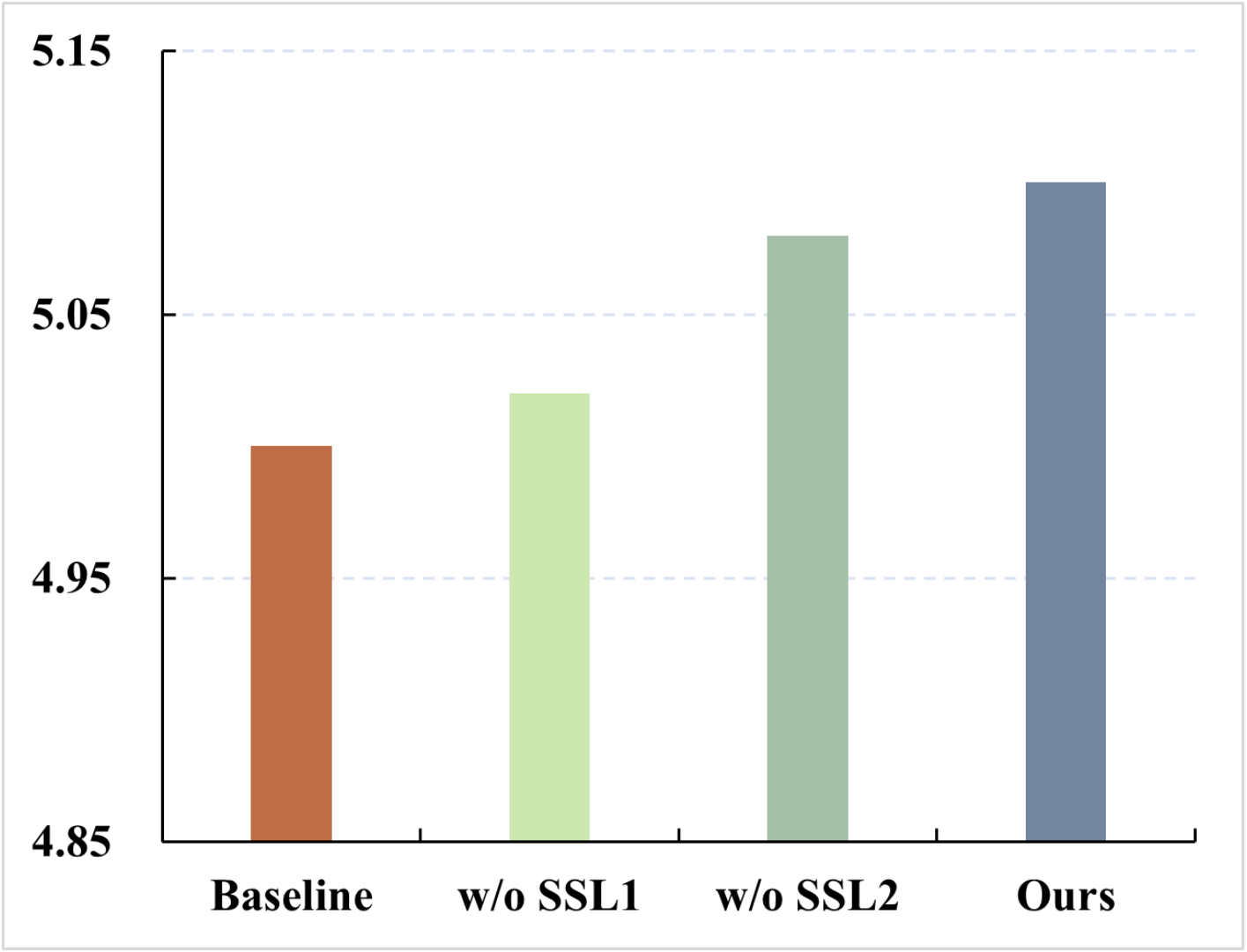}}
\centerline{{Yelp with N@50}}
\vspace{10pt}
\end{minipage}
\begin{minipage}{0.23\linewidth}
\centerline{\includegraphics[width=1\textwidth]{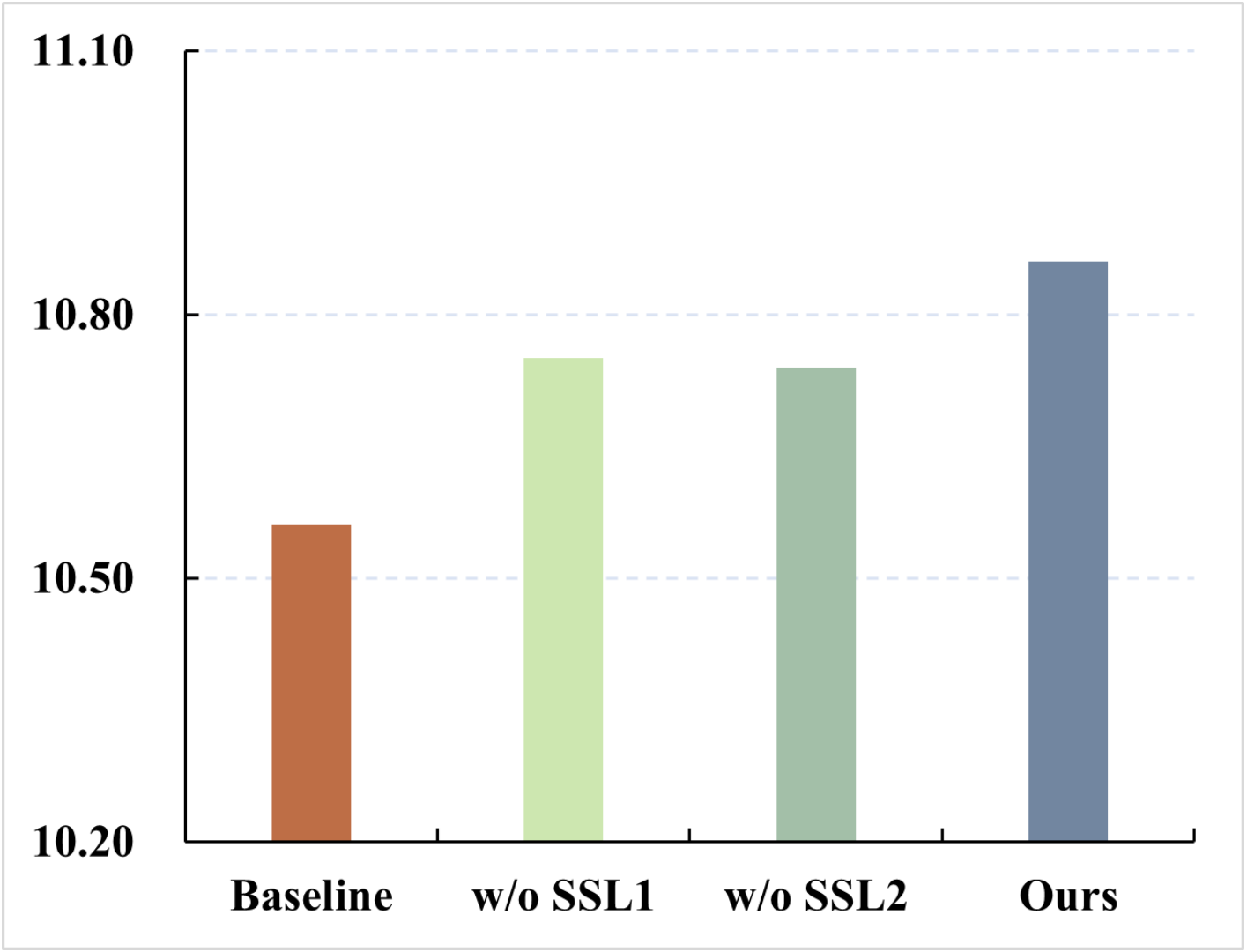}}
\centerline{{Synthetic with N@20}}
\vspace{10pt}
\centerline{\includegraphics[width=1\textwidth]{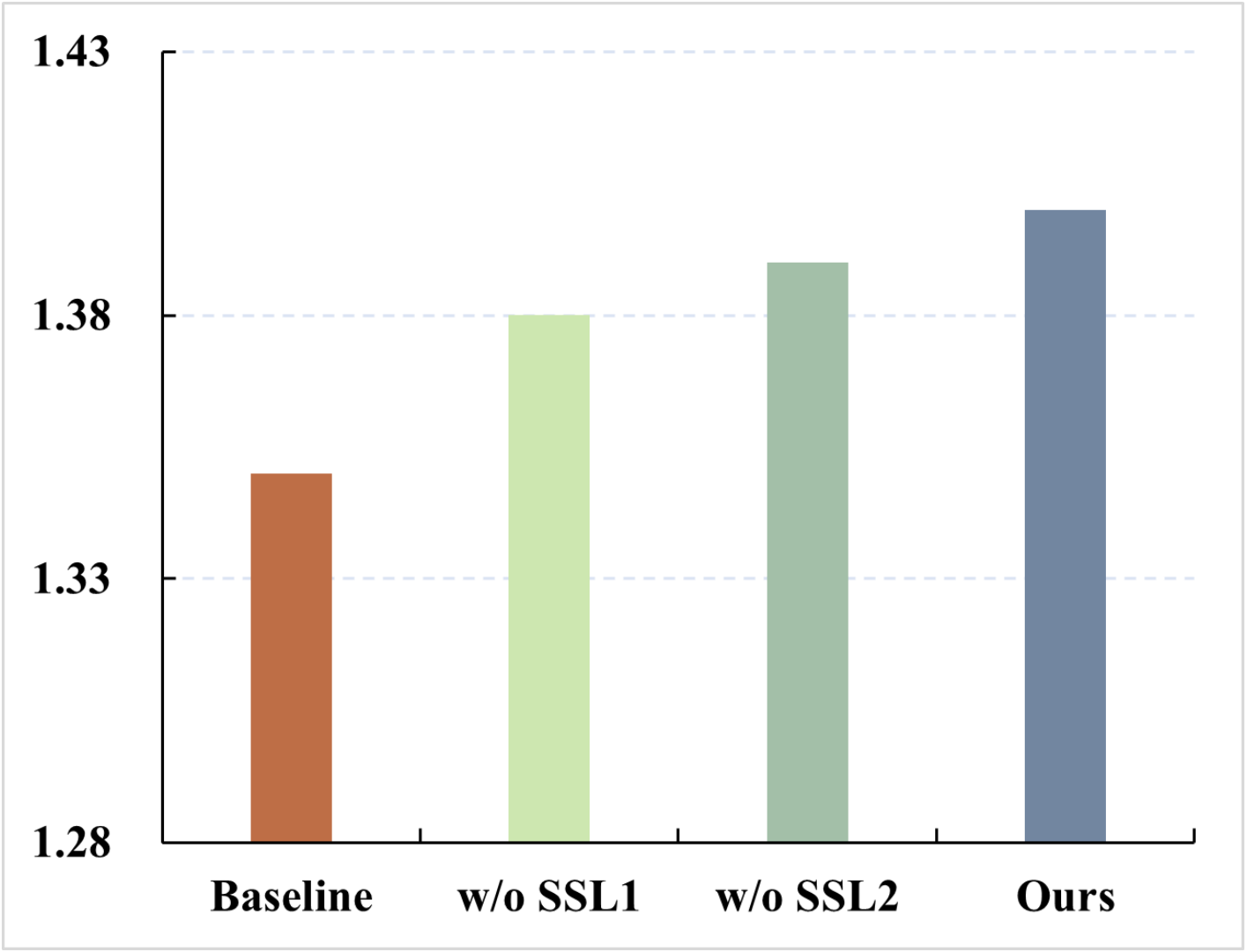}}
\centerline{{Meituan with N@100}}
\vspace{10pt}
\centerline{\includegraphics[width=1\textwidth]{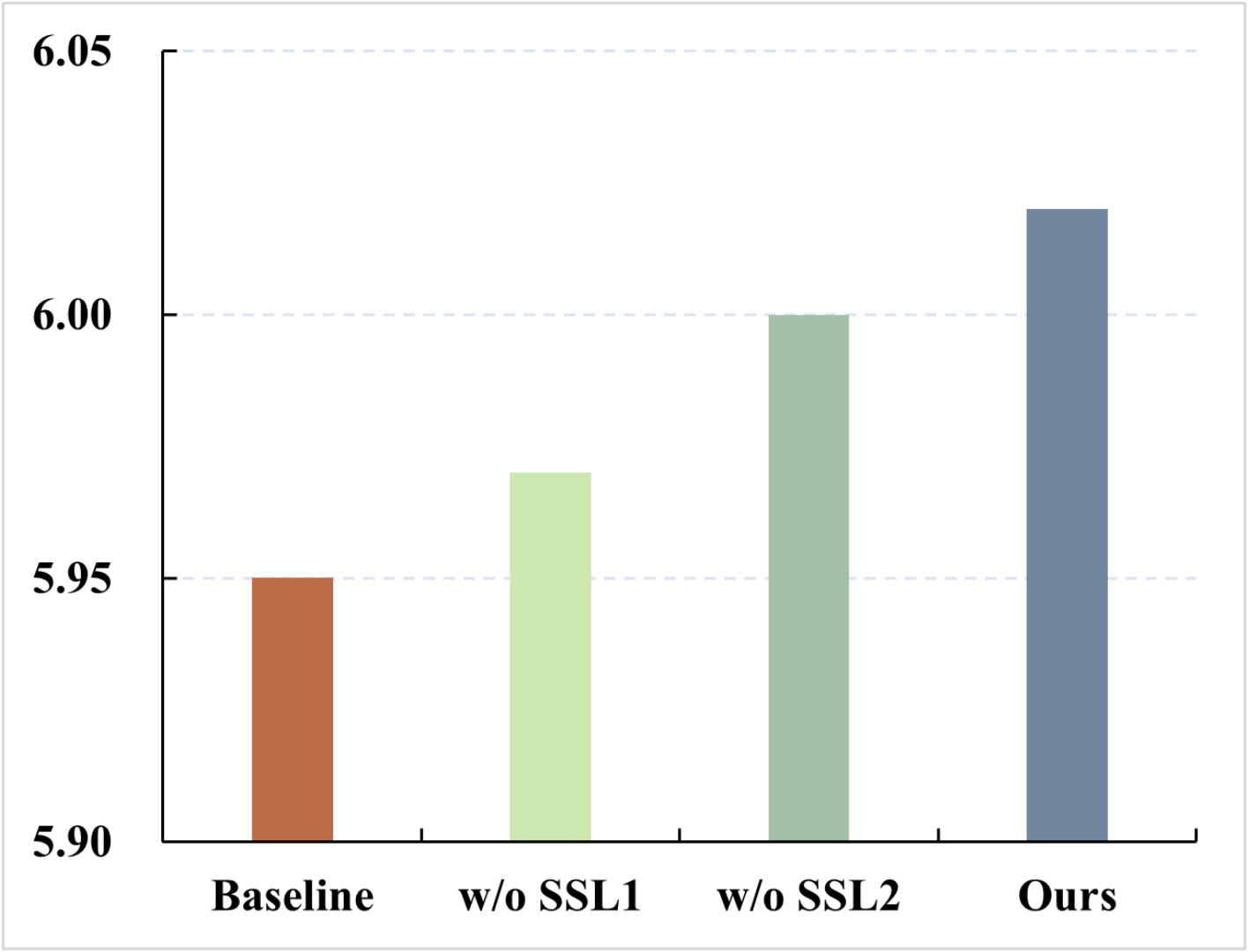}}
\centerline{{Yelp with N@100}}
\vspace{5pt}
\end{minipage}
\caption{Ablation studies over proposed DT3OR on three datasets. ``(w/o) SSL1'', ``(w/o) SSL2'' and ``(w/o) SSL1\&SSL2'' denote the reduced models by individually removing the self-distillation task, the contrastive task, and all aforementioned modules combined.}
\label{ablation_res}
\vspace{5pt}
\end{figure*}

\subsubsection{Sensitivity Analysis for trade-off $\alpha$}

To investigate the impact of the trade-off parameters $\alpha$ in our model, we performed experiments on the Synthetic, Meituan, and Yelp datasets. We varied the values of $\alpha$ within the range $\{0.01, 0.1, 1.0, 10, 100\}$. It is to be noted that all experimental settings remained consistent across different values of $\alpha$.
The results of these experiments are depicted in Fig.~\ref{sen_alpha}. The observations are as follows. 

\begin{itemize}
    \item When $\alpha$ is assigned extreme values, such as $0.01$ or $100$, the performance of the recommendation system tends to decrease. 
    \item Extreme values can disrupt the balance between different loss components, whereas values around $1.0$ tend to yield better results.
\end{itemize}

\begin{figure*}
\centering
\begin{minipage}{0.23\linewidth}
\centerline{\includegraphics[width=1\textwidth]{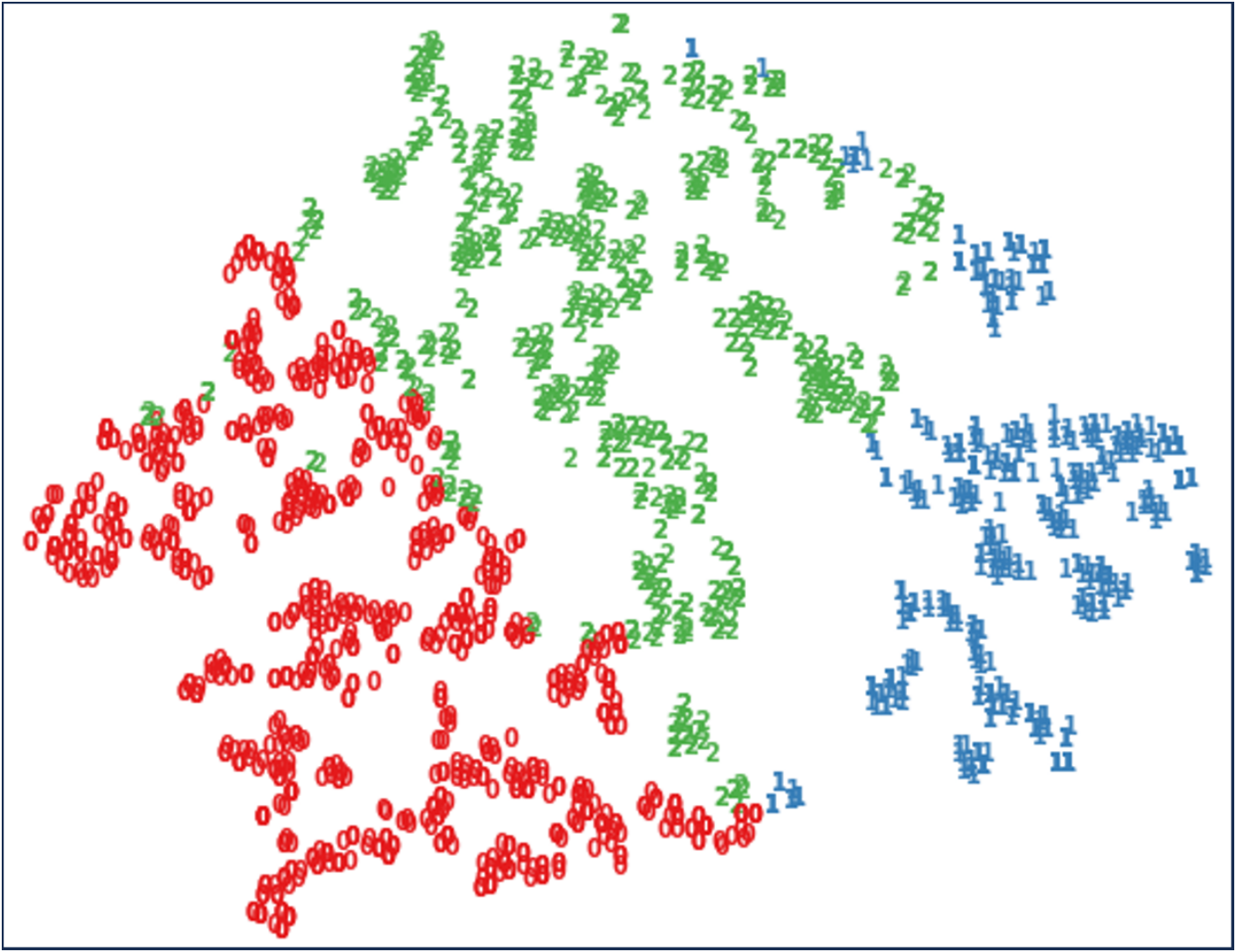}}
\vspace{5pt}
\centerline{{K = 3}}
\end{minipage}
\begin{minipage}{0.23\linewidth}
\centerline{\includegraphics[width=1\textwidth]{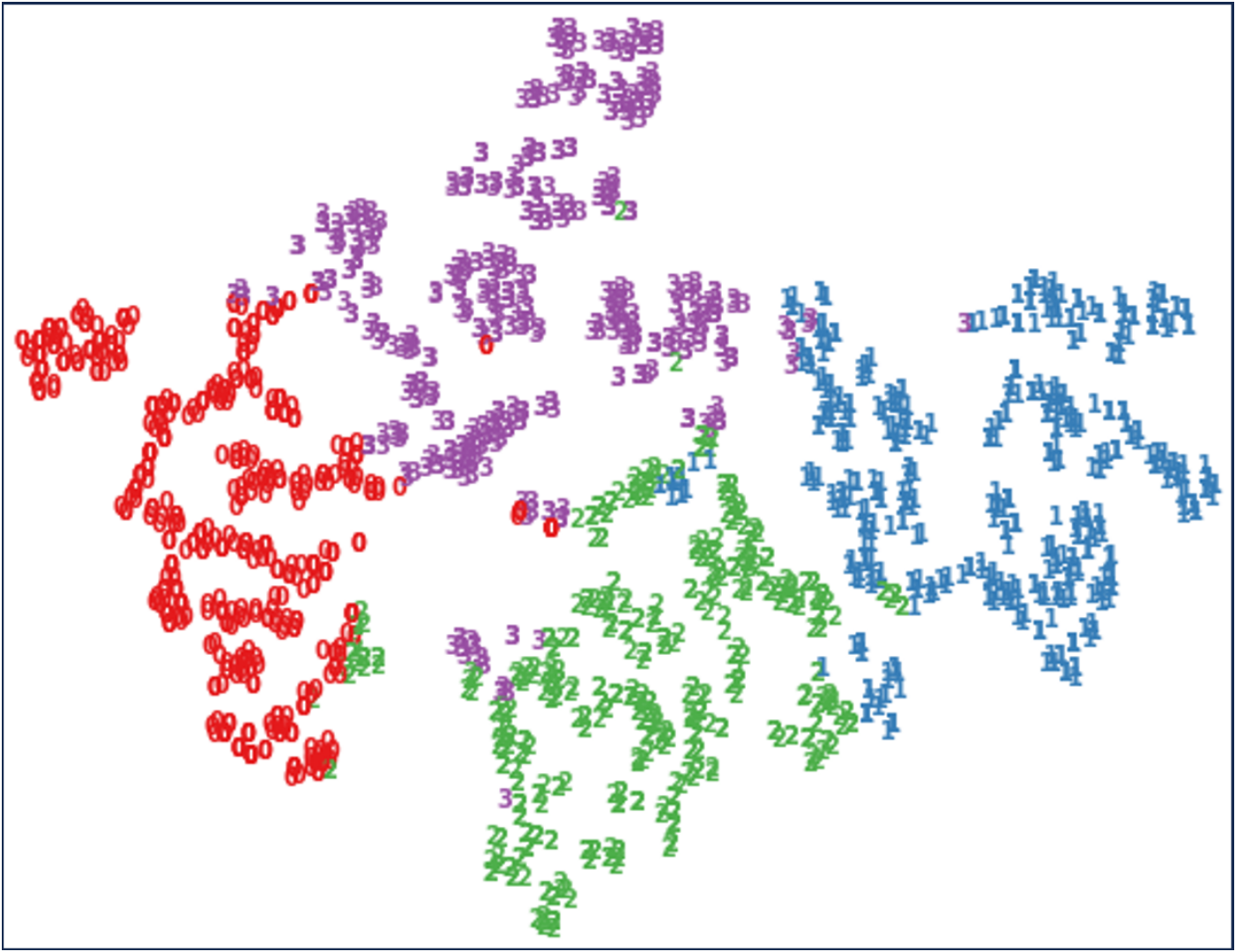}}
\vspace{5pt}
\centerline{{K = 4}}
\end{minipage}
\begin{minipage}{0.23\linewidth}
\centerline{\includegraphics[width=1\textwidth]{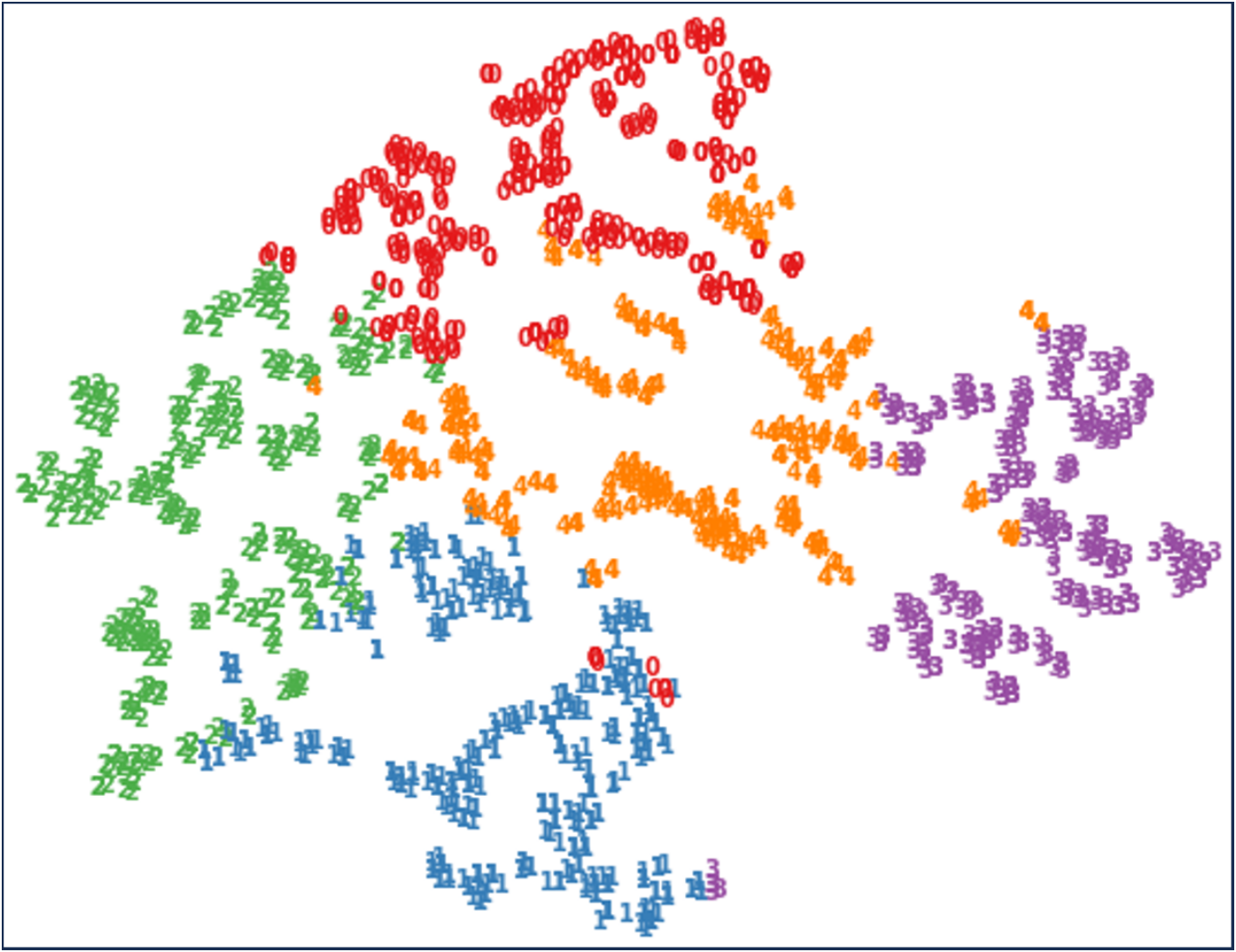}}
\vspace{5pt}
\centerline{{K = 5}}
\end{minipage}
\begin{minipage}{0.23\linewidth}
\centerline{\includegraphics[width=1\textwidth]{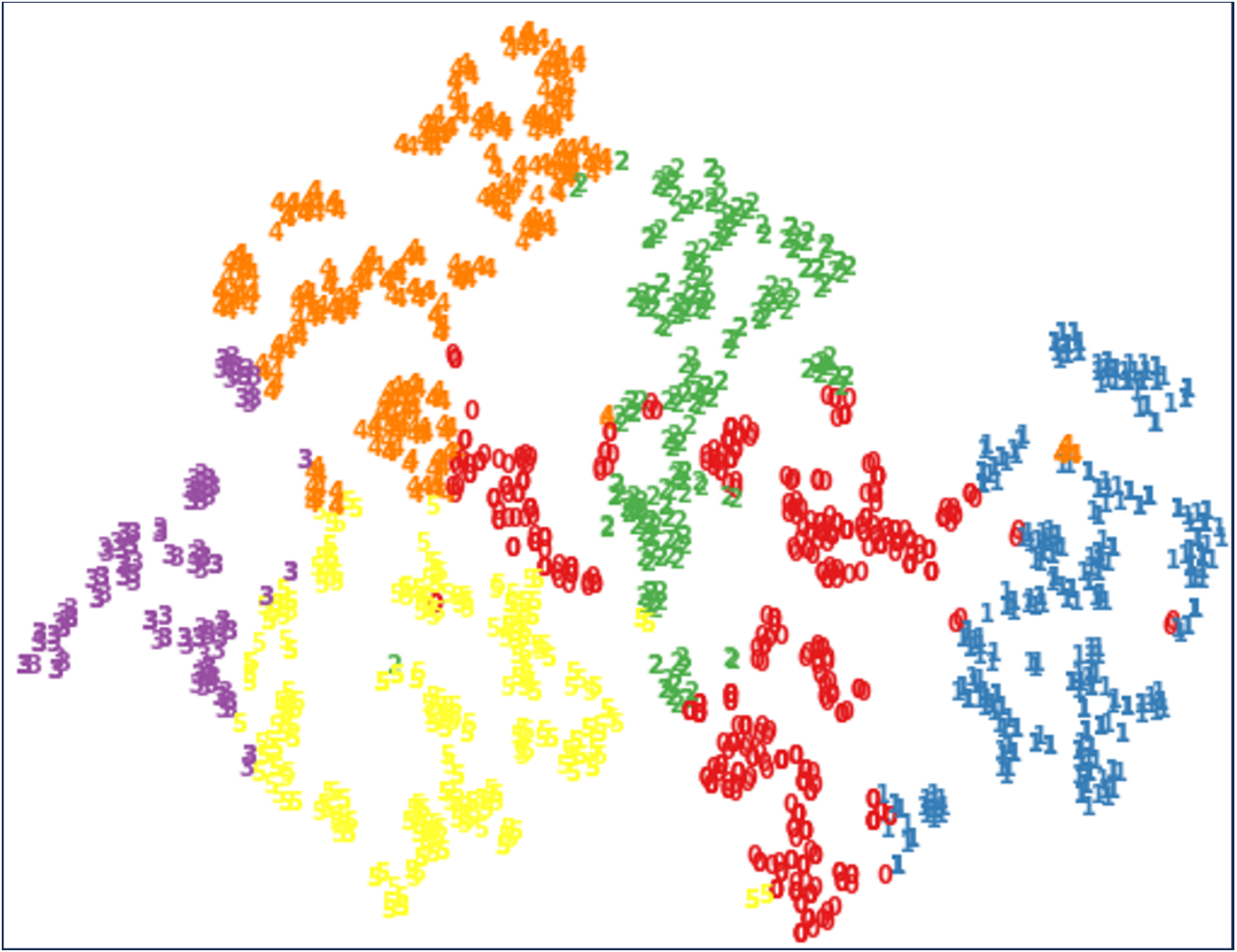}}
\vspace{5pt}
\centerline{{K = 6}}
\end{minipage}
\vspace{2pt}
\caption{Visualization experiment on Synthetic dataset with different cluster number K.}
\label{vis_res}
\end{figure*}

\subsubsection{Sensitivity Analysis for cluster number $K$}

We implement sensitive experiments to explore the impact of the parameter $K$, which represents the number of interest centers. We varied the value of $K$ within the range of $\{2, 4, 10, 20, 100\}$. According to the results in Fig.~\ref{sen_K}, we could obtain the following observations. 

\begin{itemize}
    \item The model achieves the best recommendation performance when $K$ is around 4. When $K$ takes extreme values, e.g., $K=100$, the performance will decrease dramatically. We speculate that this is because the interest centers become too scattered, making it difficult to accurately reflect the true preferences of users. 
    \item A similar situation occurs when $K=2$, where having too few interest centers fails to effectively capture the diverse preferences of users.
\end{itemize}

\subsubsection{Sensitivity Analysis for the threshold $\tau$}

We explore the impact of the threshold $\tau$ by implementing experiments on three datasets. The results are presented in Fig.~\ref{sen_tau}. We could conclude that the model achieves promising performance as $\tau$ increases. This can be attributed to the fact that the high-confidence centers $\textbf{C}_i$ become more reliable with a higher threshold.  

\begin{figure}[h]
\centering
\begin{minipage}{0.458\linewidth}
\centerline{\includegraphics[width=1\textwidth]{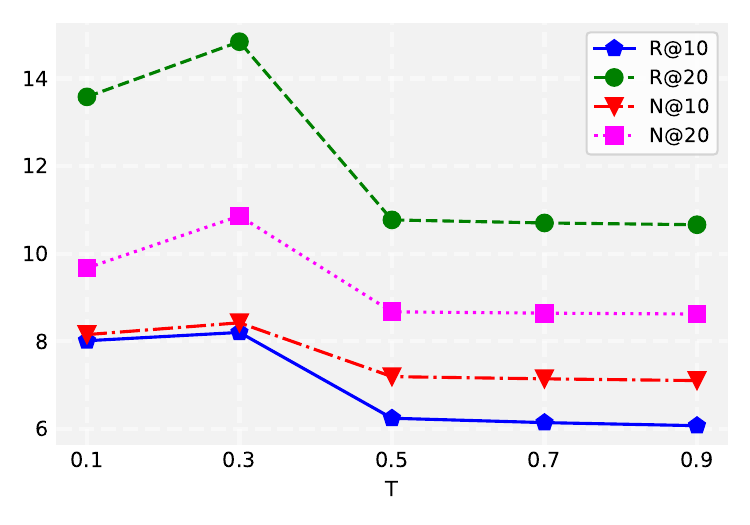}}
\centerline{{Synthetic}}
\vspace{10pt}
\end{minipage}
\begin{minipage}{0.48\linewidth}
\centerline{\includegraphics[width=1\textwidth]{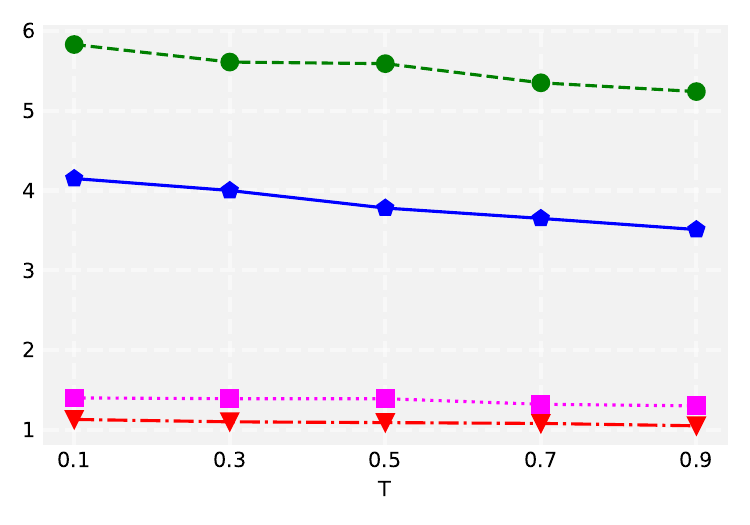}}
\centerline{{Meituan}}
\vspace{10pt}
\end{minipage}
\caption{{Sensitivity analysis of the hyper-parameter temperature $T$ on Synthetic and Meituan datasets.}}
\label{sen_T}
\end{figure}

\subsubsection{{Sensitivity Analysis for the temperature $T$}}
{We conduct the experiments to verify the sensitivity of the temperature parameter in sharpen function, i.e., Eq.~\eqref{sharpen_func}. To be specifically, we select the value of $T$ from the range of $\{0.1, 0.3, 0.5, 0.7, 0.9\}$. The results are presented in Fig.~\ref{sen_T}. 

We could observe that lowering the temperature could achieve better recommendation performance in out-of-distribution scenario. We analyze the reason is that when $T\to 0$, the output of the Sharpen function will approximate to a one-hot distribution. The entropy will decrease with lowering temperature. The specific setting of $T$ is shown in Tab.~\ref{hyper_para}.}

\subsection{(\textbf{RQ3})~Ablation Studies}
We implement ablation studies in OOD recommendation scenario to validate the effectiveness of the SSL tasks we designed: self-distillation task and contrastive task. Concretely, ``(w/o) SSL1'', ``(w/o) SSL2'' and ``(w/o) SSL1\&SSL2'' denote the reduced models where we individually removed the self-distillation task, the contrastive task, and all aforementioned modules combined. Here, we adopt COR~\cite{cor} as the backbone network to acquire representations for recommendation task, referred to as ``(w/o) Cau\&Con''. We have only removed the ablation module section while keeping the other modules unchanged from the DT3OR for conducting the ablation experiments. Specifically, apart from removing the corresponding modules, we keep identical parameter settings and training processes as used in the DT3OR. 

The results illustrated in Fig \ref{ablation_res} indicate that the removal of any of the designed modules results in a significant decrease in recommendation performance, which indicate the contribution of each module to the model's overall efficacy. We delve deeper into the reasons for these results as follows:
\begin{itemize}
    \item We utilize clustering task to obtain the interest centers of different types of users, which could better benefit the model to learn the preferences (user-like-item) in test-time phase of shift user/item features. 
    \item Benefiting from the data mining capacity, the contrastive mechanism could improve the discriminative capacity of the model, thus reducing the impact of the shift in user/item features.  
\end{itemize}

\subsection{(\textbf{RQ4})~Visualization Analysis}

To reveal the intrinsic interest clustering structure, we utilize visualization experiments to display the distribution of the learned embeddings in this subsection. To be concrete, experiments are implemented with $t$-SNE algorithm~\cite{T_SNE} on the Synthetic dataset. We perform visualization experiments with varying cluster numbers, i.e., $K \in [3,4,5,6]$. As presented in Fig.~\ref{vis_res}, the visual results emphasize that our proposed DT3OR method effectively identifies and represents the underlying interest clusters.

\subsection{{(\textbf{RQ5})~Time and Space Cost}}

{In this subsection, we conduct experiments to evaluate the time and space complexity using five datasets, i.e., Synthetic, Meituan, Yelp, Amazon, and Steam. The compared baselines are InvCF~\cite{InvCF}, COR~\cite{cor}, DR-GNN~\cite{DR-GNN}, and ours DT3OR. The experiments are conducted on the PyTorch deep learning platform with the 24G NVIDIA 3090. For the baselines, we utilized their source code available at Github. We measured time and space costs in seconds and gigabytes (GB), respectively. The results are reported by the average value over 10 runs. The observations are as follows.}

\begin{itemize}
\item {\textbf{Time Cost:} The experimental results for time cost are presented in Fig.\ref{time_space}. In this experiment, DT3OR utilizes COR\cite{cor} as the backbone network. We measured the training time for one epoch, averaged over 10 runs, with results reported in seconds. It can be observed that DT3OR's training time is comparable to that of other algorithms.}
\item {\textbf{Space Cost:} The space cost experimental results are shown in Fig.~\ref{time_space}. The results are reported as the average value over 10 runs and measured in gigabytes (GB). We observed that the memory costs of our DT3OR are acceptable when compared to other algorithms. Furthermore, these findings validate the effectiveness of the theoretical complexity analysis discussed in Sub-Section.~\ref{complex_loss}.}
\end{itemize}

\begin{figure}
\centering
\begin{minipage}{0.45\linewidth}
\centerline{\includegraphics[width=1\textwidth]{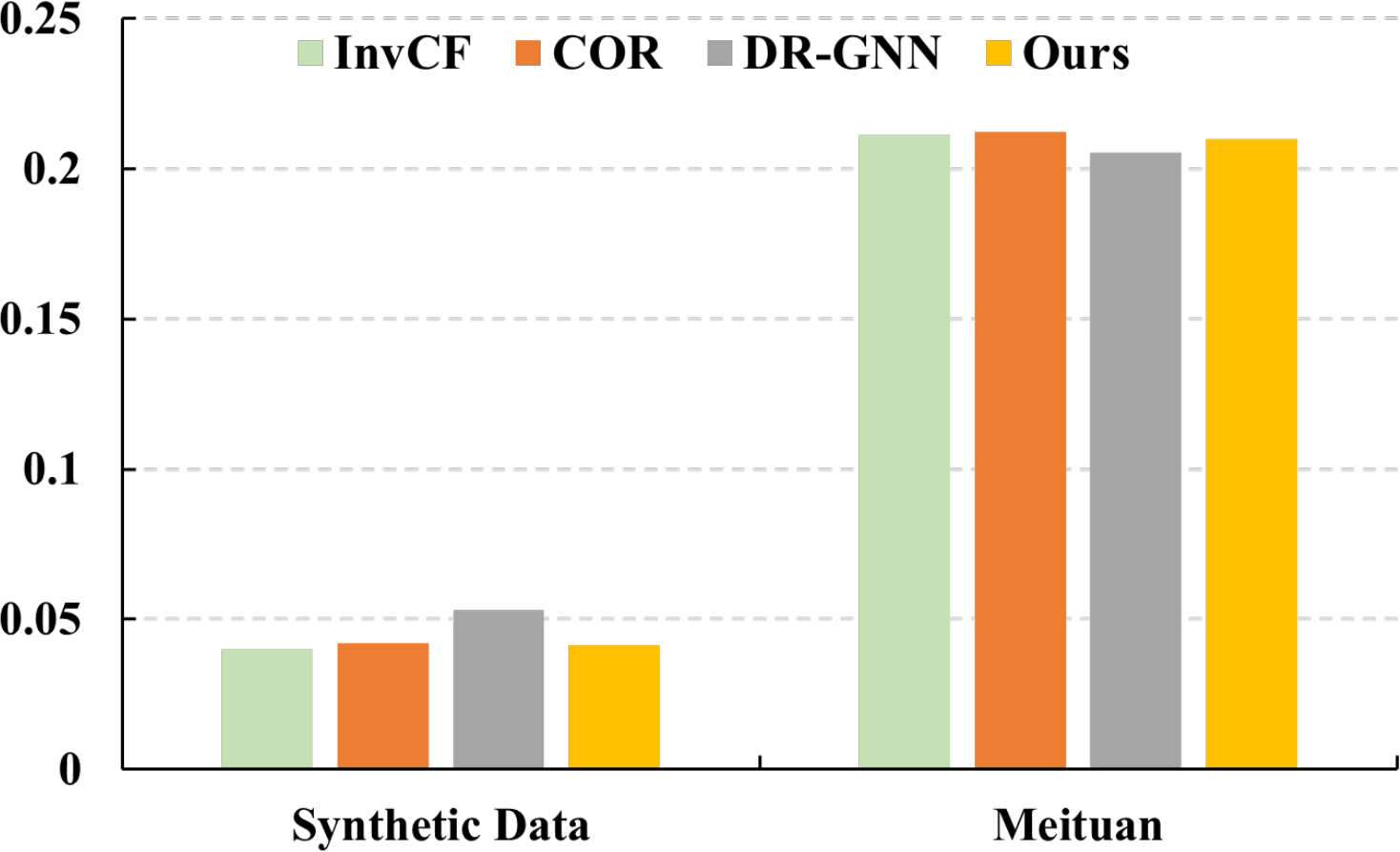}}
\vspace{10pt}
\centerline{\includegraphics[width=1\textwidth]{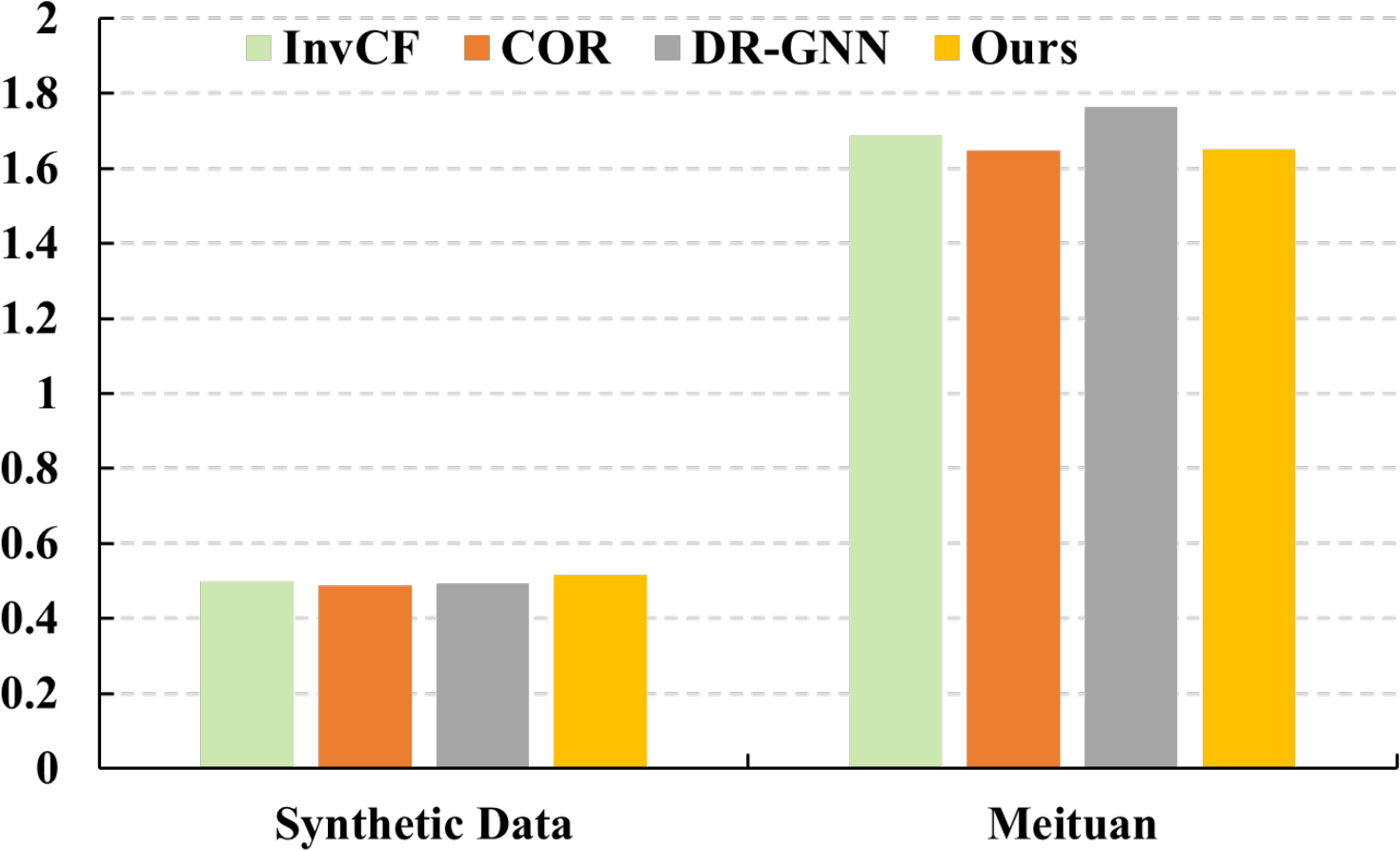}}
\vspace{10pt}
\end{minipage}
\begin{minipage}{0.45\linewidth}
\centerline{\includegraphics[width=1\textwidth]{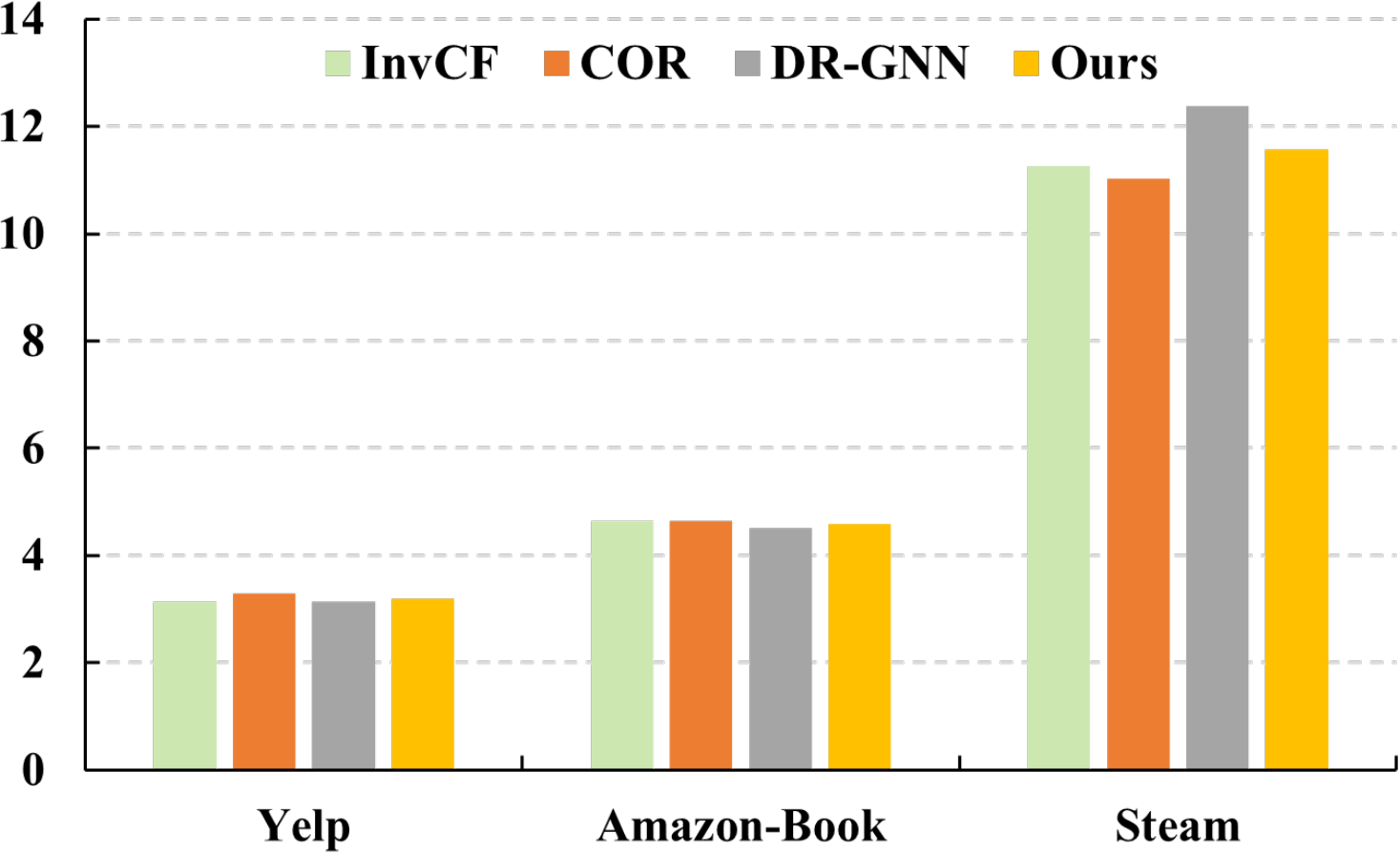}}
\vspace{10pt}
\centerline{\includegraphics[width=1\textwidth]{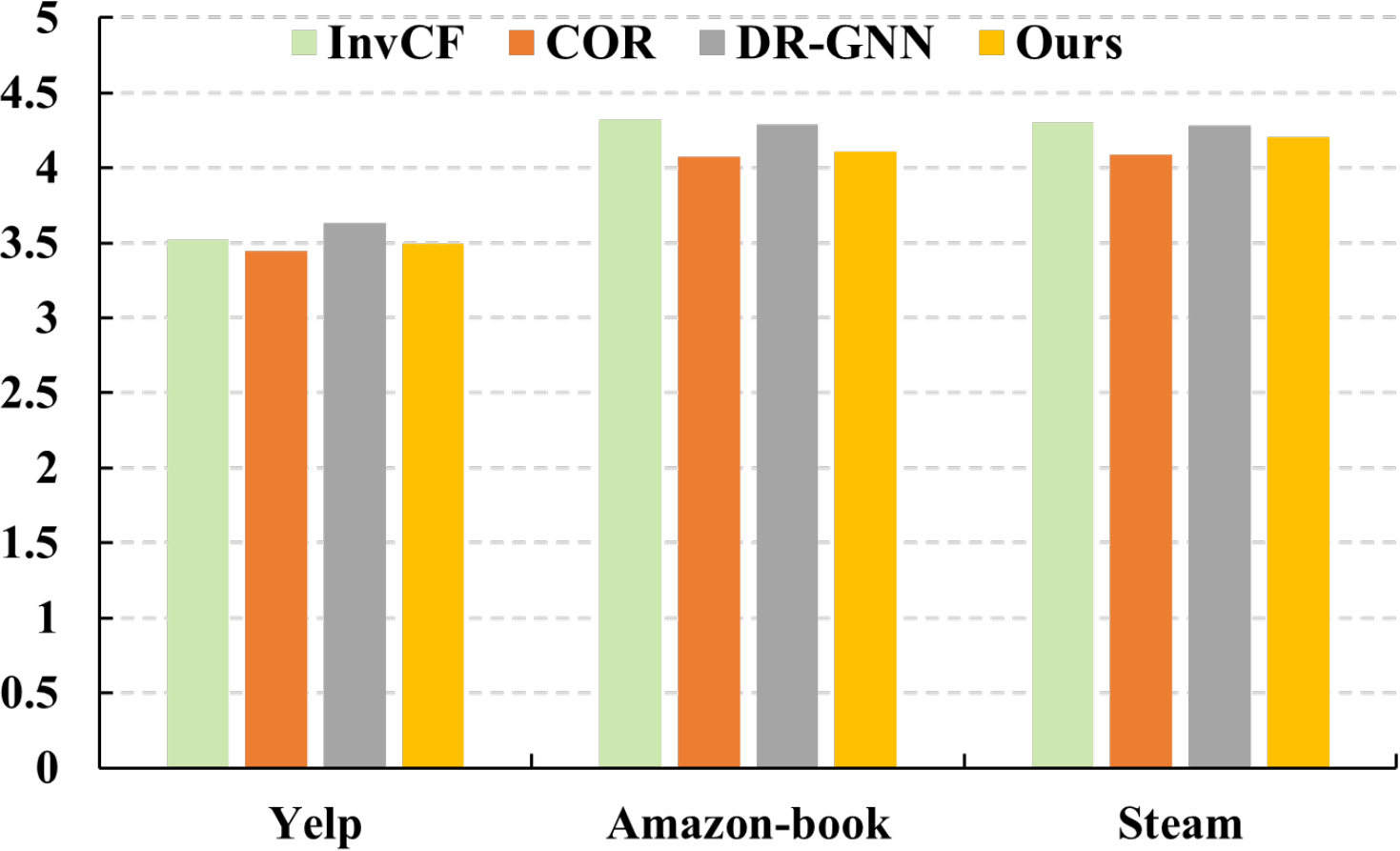}}
\vspace{10pt}
\end{minipage}
\caption{{Time cost and space cost on five datasets. Time and space complex is measured by seconds and gigabytes (GB), respectively. All results are the average value over 10 runs. The first row and second row correspond to time cost and space cost, respectively.}}
\label{time_space}
\end{figure}

\subsection{{(\textbf{RQ6})~Stability Analysis}}
{We implement experiments to explore the stability of DT3OR with different scales of distribution shift on Meituan and Amazon-Book datasets. We spit the dataset to construct temporal shift with timestamp, i.e., weekday as IID, weekend as OOD. By adjusting the proportion of the iterations with timestamp, specifically by modifying the number of weekend days, we could obtain different size of distribution shift. The experimental results are demonstrated in Fig.~\ref{change_shift}. We could observe as follows.
\begin{itemize}
    \item As the proportion of distribution shift increases, indicated by the number of weekend days, the performance of all methods deteriorates. This can be attributed to the models' insufficient ability to handle large-scale distribution shifts, resulting in poor recommendation performance. 
    \item Our method consistently achieves better performance in most cases, highlighting the generalizability of the test-time training strategies.
\end{itemize}}

\begin{figure}[]
    \centering
    \begin{minipage}{0.45\linewidth}
    \centerline{\includegraphics[width=1\textwidth]{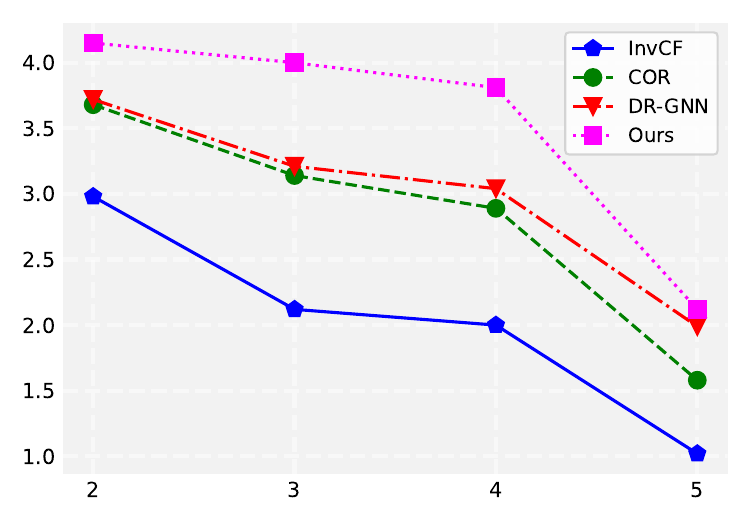}}
    \centerline{{Meituan}}
    \vspace{10pt}
    \end{minipage}
    \begin{minipage}{0.45\linewidth}
    \centerline{\includegraphics[width=1\textwidth]{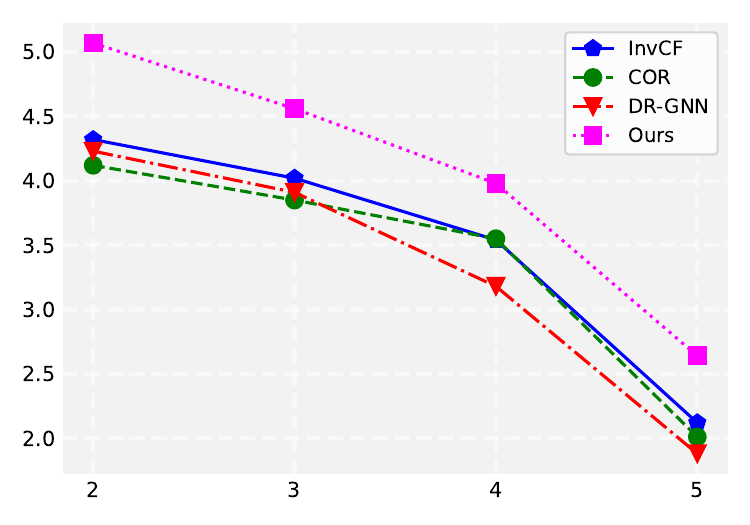}}
    \centerline{{Amazon-Book}}
    \vspace{10pt}
    \end{minipage}
    \caption{{Different size of distribution shift on Meituan and Amazon-Book datasets. The x-axis represents the number of weekend days, while the y-axis denotes the R@50 performance.}}
    \label{change_shift}
    \end{figure}

\section{Related Work}
\subsection{Test-time Training}

Test-time training (TTT)~\cite{jiaxinMM,jiaxinTKDD} is a method to improve model generalization in recent years, which allows for partial adaptation of a model based on test samples. TTT addresses the issue of distribution shifts between the training and test sets by updating the model using self-supervised learning (SSL) tasks during the test-time phase. Initially, TTT was introduced to update the model by jointly optimizing the supervised loss and SSL loss during the test period. However, these paradigms necessitate the simultaneous optimization of the supervised loss and SSL loss. To surmount this constraint, Tent~\cite{wang2020tent} presents a fully test-time training approach that solely uses test samples and a pre-trained model. Inspired by the success of TTT in handling out-of-distribution problems, we apply this technique to the OOD recommendation scenario. By leveraging TTT, we can adapt the recommendation model using test samples and a pre-trained model, improving its performance in the presence of distribution shifts.

\subsection{Self-Supervised Learning}

Self-supervised learning (SSL) algorithms have recently garnered significant attention in the domain of graph neural networks (GNNs)~\cite{huangexploring,huangnodes}. SSL is a widely employed technique for training neural networks as evidenced in various studies~\cite{CCGC,convert,Graphlearner,MGCN,liang_tkde}. A fundamental technique in SSL is contrastive learning, which focuses on increasing the similarity of positive pairs while decreasing the similarity of negative pairs. This approach has achieved remarkable success in graph learning~\cite{yujie1,yujie2}. For instance, DCRN~\cite{DCRN} strengthens feature decorrelation by aligning the representations of identical nodes across varied views while distancing the representations of dissimilar nodes in different views. Similarly, GMI~\cite{GMI} extends DGI~\cite{DGI} by incorporating edges and node features, to mitigate representation collapse. Additionally, the efficacy of amplifying mutual information to extract and learn embeddings for graph classification tasks has been demonstrated by MVGRL~\cite{MVGRL} and InfoGraph~\cite{InfoGraph}. Building on these principles, GraphCL~\cite{GraphCL} produces two augmented views for SSL, and learn the representations by aligning the same nodes in both views and separating distinct nodes.

\subsection{Recommender System}

Recommender systems have become a vital tool for discovering user's interests and preferences in various online platforms, which assume that the user and item follow the independent and identically distributed (I.I.D.). OOD recommendation has received limited attention, which can be roughly divided into three classes. 1) Disentangled recommendation approaches~\cite{dis_rec} aim to learn factorized representations of user preferences to enhance robustness against distribution shifts. 2) Causal-based methods~\cite{cor,causpref} leverage causal learning techniques to address OOD problems. However, these methods often require interventions during the training process, which may be impractical when only a pre-trained model is available. 3) Model retraining~\cite{retraining} focuses on adapting the model to the OOD environment. This approach necessitates a trade-off between retraining frequency and computational cost. Benefiting from the success of test-time training (TTT), we introduce the TTT strategy to assist model adaptation in OOD scenarios in this paper.

\section{Conclusion}

In this paper, we first introduce the test-time training strategy to address the out-of-distribution recommendation problem. The model could adapt the shift user/item feature distribution in test-time phase with our carefully designed self-supervised learning tasks. Specifically, the proposed self-distillation task and the contrastive task assist the model learn user's invariant interest performance and the shift feature distributions in the test-time phase, resulting in the model achieving better performance on test data. Moreover, we give theoretical analysis to support the rationale of our dual test-time training strategy. Comprehensive experiments have affirmed the efficacy of DT3OR.

In this paper, we make an initial attempt to alleviate the out-of-distribution recommendation problem using a test-time training strategy. The key idea is to design an unsupervised self-distillation task and a contrastive learning task for the test-time model adaptation. The core operation in these tasks is obtaining the users' preference centers through clustering. How to obtain high-quality clustering of preference centers when the dataset contains many user and item types is a valuable research question. Moreover, determining the optimal number of test-time training iterations for effective parameter updates remains a challenging issue. Excessive iterations may cause the model to lose valuable information learned during the initial training phase, while insufficient iterations may prevent the model from fully capturing the shifted data distribution. In the future, we will further explore the aforementioned issues to enhance the generalization ability of recommendation models in OOD scenarios.

\section*{Acknowledgments}
This work was supported by the National Science and Technology Innovation 2030 Major Project under Grant No. 2022ZD0209103, the National Natural Science Foundation of China (project No. 62325604, 62276271, 62406336, 62376039, U24A20253, 62476281), and the Program of China Scholarship Council (No. 202406110009).

\section{Appendix}
\subsection{Hyper-parameter Settings}\label{hyper}
\begin{table}[]
\centering
\caption{Hyper-parameter setting for our method.}
\scalebox{0.85}{
\begin{tabular}{cccccc}
\hline
{\textbf{Dataset}} & {\textbf{Synthetic data}} & {\textbf{Meituan}} & {\textbf{Yelp}} & {\textbf{Amazon-Book}} & {\textbf{Steam}} \\ \hline
{Learning Rate}    & {1e-4}                    & {1e-5}             & {1e-4}          & {1e-4}                 & {1e-4}           \\
{Drop Rate}        & {0.5}                     & {0.5}              & {0.5}           & {0.5}                  & {0.5}            \\
{Epoch}            & {10}                      & {10}               & {10}            & {10}                   & {10}             \\
{$\alpha$}           & {1.0}                     & {1.0}              & {1.0}           & {1.0}                  & {1.0}            \\
{$\tau$}             & {0.9}                     & {0.8}              & {0.9}           & {0.9}                  & {0.9}            \\
{$T$}    & {0.1}                    & {0.3}             & {0.3}          & {0.1}                 & {0.1}           \\
{$K$}                & {4}                       & {5}                & {4}             & {4}                    & {5}              \\ \hline
\end{tabular}}
\label{hyper_para}
\end{table}
{In this subsection, we report the statistics summary and hyper-parameter settings of our proposed method in Tab.~\ref{hyper_para}.}

\subsection{The Construction for Synthetic Data}\label{Constrcution}
We elaborate details on the process of constructing the Synthetic dataset in this part. Following the methodology in COR~\cite{cor}, we constructed the dataset through four primary steps:

1) Sampling user and item features;

2) Estimating user preferences;

3) Sampling user interactions;

4) Collecting OOD (Out-of-Distribution) data.

Specifically, the first step involves sampling user and item features from the standard Gaussian distributions, denoted as $\mathcal{N}(0,1)$ for users and $\mathcal{N}(-1,1)$ for items. We set the number of users and items to be 1,000 each. The second step pertains to user preference estimation, which is grounded in prior knowledge. For instance, higher income might positively influence preferences for more expensive items. The relationships can be either positive or negative. We calculate user preferences by summing the user features with either positive or negative weights and then apply a sigmoid function to introduce non-linearity into these relationships.

In the third step, sampling of user interactions is carried out. Utilizing the acquired user preferences and features of item, we ascertain the relevance of user-item interactions via the sigmoid function. Interaction data is subsequently sampled from the Bernoulli distribution~\cite{bern}. The final step concerns the collection of OOD data. In this regard, we re-sample user characteristics from $\mathcal{N}(1,1)$ to accumulate user interactions in an OOD environment, maintaining item features constant. Steps 2 and 3 are then reiterated to sample new user interactions under these altered conditions. This rigorous process ensures a comprehensive and well-structured Synthetic dataset for our experiments.



\bibliographystyle{IEEEtran}
\bibliography{myreference}

\end{document}